\documentclass[a4paper,UKenglish]{lipics-v2018} 

\usepackage{amsmath,amsfonts,latexsym}
\usepackage{graphicx}
\usepackage{float}
\usepackage{enumerate}
\usepackage{multirow}
\usepackage{enumerate}
\usepackage{xcolor}
\usepackage{mathtools}
\usepackage{amssymb}
\usepackage{braket}
\usepackage{psfrag, epstopdf}
\usepackage{cite}
\usepackage{algorithm2e}
\usepackage{algorithmic}
\usepackage{caption}
\usepackage{pstricks, pst-3dplot}

\usepackage{tikz}
\usepackage{tikz-3dplot}
\definecolor{Darkgreen}{rgb}{0,0.4,0}
\definecolor{Darkblue}{rgb}{0,0,0.4}
\definecolor{ggcol}{rgb}{0.9,1,0.9}
\definecolor{bbcol}{rgb}{0.9,0.9,1}
\definecolor{rrcol}{rgb}{1,0.9,0.9}
\usepackage{hyperref}
\usepackage{pgfplots}
\usetikzlibrary{calc,3d,intersections, positioning,intersections,shapes}


\newcommand{\Z}{{\mathbb{Z}}}  
\newcommand{\R}{{\mathbb{R}}}  
\newcommand{\Sph}{{\mathcal{S}}}  

\newcommand{\dist}[1]{\zeta(#1)}

\newlength\myindent
\setlength\myindent{2em}
\newcommand\bindent{%
  \begingroup
  \setlength{\itemindent}{\myindent}
  \addtolength{\algorithmicindent}{\myindent}
}
\newcommand\eindent{\endgroup}

\bibliographystyle{plainurl}

\title{Towards Uniform Online Spherical Tessellations}

\titlerunning{Towards Uniform Online Spherical Tessellations}

\author{Paul C. Bell}{Department of Computer Science, James Parsons Building, Byrom Street, Liverpool John Moores University, Liverpool, L3-3AF, UK}{p.c.bell@ljmu.ac.uk}{ https://orcid.org/0000-0003-2620-635X}{}

\author{Igor Potapov}{Department of Computer Science, Ashton Building, Ashton Street, University of Liverpool, Liverpool, L69-3BX, UK}{potapov@liverpool.ac.uk}{}{Research partially funded by the grant EP/R018472/1 ``Application driven Topological Data Analysis''}

\authorrunning{P.\,C. Bell and I.\, Potapov}

\Copyright{Paul C. Bell and Igor Potapov}

\subjclass{11K38; \and 52C35; \and 68W27;  \and 52C45; \and 68U05.}

\keywords{Online algorithms; \and Discrepancy theory; \and Spherical trigonometry; \and Uniform point placement; \and Computational geometry}

\nolinenumbers 
\hideLIPIcs  

\begin{document}

\maketitle

\begin{abstract}
The problem of uniformly placing $N$ points onto a sphere finds applications in many areas. For example, points on the sphere correspond to unit quaternions as well as to the group of rotations SO$(3)$ and the online version of generating uniform rotations (known as ``incremental generation'') plays a crucial role in a large number of  engineering applications ranging from robotics and aeronautics to computer graphics.
An online version of this problem was recently studied with respect to the \emph{gap ratio} as a measure of uniformity. The first online algorithm of Chen et al. was upper-bounded by $5.99$ and later improved to $3.69$, which is  achieved by considering a circumscribed dodecahedron followed by a recursive decomposition of each face. 

In this paper we provide  a more efficient tessellation technique based on the regular icosahedron, which improves the upper-bound for the online version of this problem, decreasing it
  to approximately $2.84$. Moreover, we show that the lower bound for the gap ratio of placing at least three points is $\frac{1+\sqrt{5}}{2} \approx 1.618$ and for at least four points is no less than $1.726$.  
 \end{abstract}

\tdplotsetmaincoords{90}{90}

\section{Introduction}
One of the central problems of classical discrepancy theory is to maximize the uniformity of distributing a set of $n$ points into some metric space \cite{Discrepancy1991,Discrepancy2000}.  For example, this includes questions about arranging points over a unit cube in a $d$-dimensional space, a polyhedral region, a sphere, a torus or even over a hyperbolic plane, etc. Applications of modern day discrepancy theory include those in viral morphology, crystallography, molecular structure, electrostatics, computing quadrature formulas \cite{Saff97}, 3D projection reconstruction of Magnetic Resonance Images (MRI) \cite{Ko14}, investigations of carbon fullerenes (C$_{60}$ and C$_{70}$) \cite{CS91}, complexity theory \cite{tammes}, connections to quaternion rotations \cite{quaternion}, Ramsey theory, problems in numerical integration, financial calculations, computer graphics and computational physics \cite{Discrepancy1999}. The choice of discrepancy measure impacts the type of questions that can be studied. For example, a logarithmic variant of the Coulomb potential allows one to define Smale's $7^{\text{th}}$ problem \cite{smale}; from a list of $18$ important questions for the $21^{\text{st}}$ century. 

In order to measure the discrepancy from uniformity, the \emph{gap ratio} metric was introduced by Teramoto et al.~\cite{Teramoto} for analysing dynamic discrepancy, i.e. the ratio between the maximum  and minimal gap, where the maximum gap is the diameter of the largest empty circle on the sphere surface  and the minimal gap is the minimum pairwise distance on the sphere surface (orthodromic distance). A similar measure of taking the radius of the largest empty circle over  minimum pairwise distance is  
known as the ``mesh ratio''  \cite{BHS12} or ``mesh-separation ratio''  \cite{JComplexity2015,Etayo2019}.
One might consider defining uniformity just by measuring the closest two points, however this does not take into account large undesirable gaps that may be present in the point set. 
Alternatively one may use the standard measure from discrepancy theory; define some fixed geometric shape $R$ and count the number of inserted points that are contained in $R$, whilst moving it all over some space.
This measure has two main disadvantages -- that of computational hardness of calculating the discrepancy at each stage and also that we must decide upon a given shape $R$, each of which may give different results \cite{Teramoto}. 

Generating a set of uniformity distributed points in non-Euclidean space is even more challenging and somewhat counterintuitive.
For example the problem  of generating a point set on the $2$-sphere  which minimizes criteria such as energy functions, discrepancy, dispersion and mutual distances has been extensively studied  in the offline setting \cite{Hardin04, Lub86, Saff97, Sloane97, Sun08, Wag93, YJLM}.
Some motivations and applications of this problem when restricted to the $2$-sphere stretch from the classical \emph{Thompson problem} of determining a configuration of $N$ electrons on the surface of a unit sphere that minimizes the electrostatic potential energy\cite{thomp1904, Rak94}, to search and rescue/exploration problems 
as well as problems related to extremal energy, crystallography and computational chemistry \cite{Saff97}. 
In the original offline version of the problem of distributing points over some space, the number of points is predetermined and the goal is to distribute all points as uniformly as possible at the end of the process. 

In contrast, the \emph{online} optimisation problem requires that the points should be dynamically inserted one at a time on the surface of a sphere without knowing the total number of points in advance. 
Thus, the objective is to distribute the points as uniformly as possible at every instance of inserting a point. Note that in this case
once a point has been placed \emph{it cannot be later moved}.  
The points on the sphere correspond to unit quaternions and the group of rotations SO$(3)$ \cite{Kirk,quaternion}, which have a large number of engineering applications in the online setting (also known as ``incremental generation'') and plays a crucial role in applications ranging from robotics and aeronautics to computer graphics \cite{YJLM}.

The online variant of distributing points in a given space  has already been studied, e.g. 
for inserting integral points on a line \cite{IPL2008} or on a grid \cite{Zhang11},
inserting real points over a unit cube \cite{Teramoto} and also recently as a more complex version of inserting real points on the surface of a sphere \cite{walcom17,YJLM,ZC18}. 
A good strategy for online distribution of points on the plane has been found in \cite{IPL2007,Teramoto}
based on the Voronoi insertion, where the gap ratio is proved to be at most $2$. For insertion on a two-dimensional grid, algorithms with a maximal gap ratio $2\sqrt{2} \approx 2.828$  were shown in  \cite{Zhang11}. The same authors showed that the lower bound for the maximal gap ratio is $2.5$ in this context. The other important direction was to solve the problem for the one-dimensional line and an insertion strategy with a uniformity of $2$ has been found in \cite{IPL2008}. An approach of using \emph{generalised spiral points} was discussed in \cite{Rak94, Saff97}, which performs well for minimizing extremal energy, but this approach is strictly offline (number of points $N$ known in advance).

Recently the problem of {\sl online distribution of points on the $2$-sphere} has been proposed in 
\cite{walcom17} where a \emph{two phase point insertion algorithm} with an overall upper bound of $5.99$ was designed. The first phase uses a circumscribed dodecahedron to place the first twenty vertices, achieving a maximal gap ratio $2.618$. After that, each of the twelve pentagonal faces can be recursively divided. 
This procedure is efficient and leads to a gap ratio of no more than $5.99$. With  more complex analysis, the bound was recently decreased to at most $3.69$ in \cite{ZC18}.

One may consider whether such two phase algorithms may perform well for this problem by either modifying the initial shape used in the first phase of the algorithm (such as using initial points derived from other Platonic solids) or else whether the recursive procedure used in phase two to tessellate each regular shape may be improved. We may readily identify an advantage to choosing a Platonic solid for which each face is a triangle (the tetrahedron, octahedron and icosahedron), since in this case at least two procedures for tessellating each triangle immediately spring to mind -- namely to recursively place a new point at the centre of \emph{each edge}, denoted \emph{triangular dissection} (creating four subtriangles, see Fig.~\ref{normaltess}), or else the Delaunay tessellation, which is to place a new point at the \emph{centre of each triangle} (creating three new triangles, see Fig.~\ref{deltess}). 

\begin{figure}[t]
\begin{minipage}[c]{0.27\linewidth}
\begin{center}
          \begin{tikzpicture}[scale=1.5]
\draw [black, fill=ggcol] (0.9428    ,     0) -- (-0.4714  , -0.8165) -- (0.1361 ,  -0.2357) -- cycle;
\draw [black, fill=bbcol] (0.9428   ,      0) -- (0.1361,   -0.2357) -- (0 ,    0) -- cycle;
\draw [black, fill=rrcol] (-0.4714,  -0.8165) -- (0.1361 ,  -0.2357) -- (0 ,    0) -- cycle;
\draw [black, fill=ggcol] (0.9428   ,      0) -- (-0.4714  ,  0.8165) -- (0.1361 ,   0.2357) -- cycle;
\draw [black, fill=bbcol] (-0.4714 ,   0.8165) -- (0.1361 ,   0.2357) -- (0 ,    0) -- cycle;
\draw [black, fill=rrcol] (0.9428   ,     0) -- (0.1361  ,  0.2357) -- (0 ,    0) -- cycle;
\draw [black, fill=ggcol] (-0.4714  ,  0.8165) -- (-0.4714 ,  -0.8165) -- (-0.2722    ,     0) -- cycle;
\draw [black, fill=bbcol] (-0.4714 ,  -0.8165) -- (-0.2722   ,      0) -- (0 ,    0) -- cycle;
\draw [black, fill=rrcol] (-0.4714 ,   0.8165) -- (-0.2722   ,      0) -- (0 ,    0) -- cycle;
          \end{tikzpicture}
        \end{center}
\caption{\footnotesize Delauney triangulation applied twice}\label{deltess}
\end{minipage}
\hspace{0.7cm}
\begin{minipage}[c]{0.27\linewidth}
\begin{center}
          \begin{tikzpicture}[scale=1.5]
\draw [black, fill=ggcol] (0.9428    ,     0) -- (0.5893 ,   0.2041) -- (0.5893,   -0.2041) -- cycle;
\draw [black, fill=ggcol] (0.2357 ,   0.4082) -- (0.5893 ,   0.2041) -- (0.2357 ,        0) -- cycle;
\draw [black, fill=bbcol] (0.5893   , 0.2041) -- (0.2357 ,        0) -- (0.5893  , -0.2041) -- cycle;
\draw [black, fill=ggcol] (0.2357   ,      0) -- (0.5893,   -0.2041) -- (0.2357,   -0.4082) -- cycle;
\draw [black, fill=ggcol] (-0.1179 ,   0.6124) -- (0.2357 ,   0.4082) -- (-0.1179 ,   0.2041) -- cycle;
\draw [black, fill=bbcol] (0.2357 ,   0.4082) -- (-0.1179 ,   0.2041) -- (0.2357,         0) -- cycle; %
\draw [black, fill=ggcol] (-0.1179 ,   0.2041) -- (0.2357   ,      0) -- (-0.1179 ,  -0.2041) -- cycle;
\draw [black, fill=bbcol] (0.2357   ,      0) -- (-0.1179 ,  -0.2041) -- (0.2357 ,  -0.4082) -- cycle;
\draw [black, fill=ggcol] (-0.1179 ,  -0.2041) -- (0.2357 ,  -0.4082) -- (-0.1179,  -0.6124) -- cycle;
\draw [black, fill=ggcol] (-0.4714 ,   0.8165) -- (-0.1179 ,   0.6124) -- (-0.4714  ,  0.4082) -- cycle;
\draw [black, fill=bbcol] (-0.1179 ,   0.6124) -- (-0.4714 ,   0.4082) -- (-0.1179 ,   0.2041) -- cycle;
\draw [black, fill=ggcol] (-0.4714 ,   0.4082) -- (-0.1179  ,  0.2041) -- (-0.4714    ,     0) -- cycle;
\draw [black, fill=bbcol] (-0.1179  ,  0.2041) -- (-0.4714    ,    0) -- (-0.1179 ,  -0.2041) -- cycle;
\draw [black, fill=ggcol] (-0.4714  ,       0) -- (-0.1179,  -0.2041) -- (-0.4714  , -0.4082) -- cycle;
\draw [black, fill=bbcol] (-0.1179 ,  -0.2041) -- (-0.4714 ,  -0.4082) -- (-0.1179 ,  -0.6124) -- cycle;
\draw [black, fill=ggcol] (-0.4714 ,  -0.4082) -- (-0.1179,-0.6124) -- (-0.4714 ,  -0.8165) -- cycle;
          \end{tikzpicture}
        \end{center}
\caption{\footnotesize Recursive triangle dissection (twice)}\label{normaltess}
\end{minipage}
\hspace{0.7cm}
\begin{minipage}[c]{0.27\linewidth}
\begin{center}
          \begin{tikzpicture}[scale=1.5]
\draw [black, fill=ggcol] (0.9428,0) -- (0.7607  ,  0.3981) -- (0.7607  , -0.3981) -- cycle;
\draw [black, fill=ggcol] (0.4082 ,   0.7071) -- (0.7607  ,  0.3981) -- (0.5774     ,    0) -- cycle;
\draw [black, fill=bbcol] (0.7607 ,   0.3981) -- (0.7607 ,  -0.3981) -- (0.5774   ,      0) -- cycle;
\draw [black, fill=ggcol] (0.7607,  -0.3981) -- (0.5774    ,     0) -- (0.4082 ,  -0.7071) -- cycle;
\draw [black, fill=ggcol] (-0.0356 ,   0.8578) -- (0.4082 ,   0.7071) -- (-0.2887   , 0.5000) -- cycle;
\draw [black, fill=bbcol] (0.4082 ,   0.7071) -- (-0.2887  ,  0.5000) -- (0.5774     ,    0) -- cycle;
\draw [black, fill=ggcol] (0.5774    ,     0) -- (-0.2887  ,  0.5000) -- (-0.2887,   -0.5000) -- cycle;
\draw [black, fill=bbcol] (0.5774    ,     0) -- (0.4082 ,  -0.7071) -- (-0.2887 ,  -0.5000) -- cycle;
\draw [black, fill=ggcol] (0.4082 ,  -0.7071) -- (-0.0356 ,  -0.8578) -- (-0.2887,   -0.5000) -- cycle;
\draw [black, fill=ggcol] (-0.4714  ,  0.8165) -- (-0.0356 ,   0.8578) -- (-0.7251  ,  0.4597) -- cycle;
\draw [black, fill=bbcol] (-0.0356  ,  0.8578) -- (-0.7251 ,   0.4597) -- (-0.2887  ,  0.5000) -- cycle;
\draw [black, fill=ggcol] (-0.7251  ,  0.4597) -- (-0.2887  ,  0.5000) -- (-0.8165 ,        0) -- cycle;
\draw [black, fill=bbcol] (-0.2887  ,  0.5000) -- (-0.8165   ,      0) -- (-0.2887 ,  -0.5000) -- cycle;
\draw [black, fill=ggcol] (-0.8165  ,       0) -- (-0.2887 ,  -0.5000) -- (-0.7251  , -0.4597) -- cycle;
\draw [black, fill=bbcol] (-0.2887  , -0.5000) -- (-0.7251 ,  -0.4597) -- (-0.0356  , -0.8578) -- cycle;
\draw [black, fill=ggcol] (-0.7251  , -0.4597) -- (-0.0356  , -0.8578) -- (-0.4714 ,  -0.8165) -- cycle;
          \end{tikzpicture}
        \end{center}
\caption{\footnotesize Deformation of triangles in projection}\label{skewedtess}
\end{minipage}
\end{figure}

It can be readily seen that the Delaunay tessellation of each spherical triangle rapidly gives a poor gap ratio, since points start to become dense around the centre of edges of the initial tessellation. The second recursive tessellation strategy (Fig.~\ref{normaltess}), was conjectured to give a poor ratio in \cite{walcom17}. 
This intuition seems reasonable, since as we recursively decompose each spherical triangle by this strategy the gap ratio increases as for such a triangle this decomposition deforms with each recursive step, as can be seen in Fig.~\ref{skewedtess}. It can also be seen that the gap ratio at each level of the triangular dissection increases (see Lemma~\ref{mainlem}). Nevertheless, we show in this paper that as long as the initial tessellation (stage 1) does not create too `large' spherical triangles (with high curvature), then the gap ratio of stage 2 has an upper limit, and performs much better than the tessellation of the regular dodecahedron proposed in \cite{walcom17} and \cite{ZC18}.

In this paper, we provide a new algorithm and utilise  a circumscribed regular icosahedron and the recursive triangular dissection procedure to reduce the bound of $3.69$ derived in \cite{ZC18} to 
$ \frac{\pi}{\arccos{\left( 1/\sqrt{5}\right)}} \approx 2.8376$.
Apart from a
better upper bound, an advantage of our triangular tessellation procedure is its generalisability and more efficient tessellation as we only need to compute the spherical median between two locally introduced points at every step. 

Another natural point insertion algorithm to consider is a greedy algorithm, where points are iteratively added to the centre of the largest empty circle. However, the decomposition of a 2-sphere according to the greedy approach leads to complex non-regular local structures and it soon becomes intractable to determine the next point to place \cite{walcom17, ZC18} (such points can in general even be difficult to describe in a computationally efficient way). 

A preliminary version of this paper appeared in \cite{BP18}. The present paper provides full details and diagrams for proofs, improves the lower bound of $1.618$ to $1.726$ in Theorem~\ref{lowerboundthm2}, provides some experimental results and shows in Section~\ref{counterSec} a counterexample of a claimed lower bound of $1.78$ from \cite{ZC18}.

\section{Notation}\label{sec-not}

\subsection{Spherical trigonometry}\label{spher-trig}

Given a set $P$, we denote by $2^P$ the power set of $P$ (the set of all subsets). Let $\Sph$ denote the $3$-dimensional unit sphere (the $2$-sphere). We will deal almost exclusively with unit spheres, since for our purposes the gap ratio (introduced formally later) is not affected by the spherical radius. Let $u_1, u_2, u_3 \in \R^3$ be three unit length vectors, then $T = \langle u_1, u_2, u_3\rangle$ denotes the spherical triangle on $\Sph$ with vertices $u_1, u_2$ and $u_3$. Given some set of points $\{u_1, u_2, u_3\} \cup \{v_j | 1 \leq j \leq k\}$, a spherical triangle $T = \langle u_1, u_2, u_3\rangle$ is called \emph{minimal} over that set of points if no $v_j$ for $1 \leq j \leq k$ lies on the interior or boundary of $T$. As an example, in Fig.~\ref{sigmafig}, triangle $\langle u_1, u_{113}, u_{112}\rangle$ is minimal, but $\langle u_1, u_{13}, u_{12}\rangle$ is not, since points $u_{113}$ and $u_{112}$ lie on the boundary of that triangle. 
 
The edges of a spherical triangle are arcs of great circles. A great circle is the intersection of $\Sph$ with a central plane, i.e one which goes through the centre of $\Sph$. We denote the length of a path connecting two points $u_1, u_2$ on the unit sphere by $\zeta(u_1, u_2)$ (the spherical length). 

Given a non-degenerate spherical triangle (i.e. one with positive area, defined later) with two edges $e_1$ and $e_2$ which intersect at a point $P$, then we say that the angle of $P$ is the angle of $P$ measured when projected to the plane tangent at $P$. We constrain all spherical triangles to have edge lengths strictly between $0$ and $\pi$, which avoids issues with \emph{antipodal triangles}. Two points on the unit sphere are called antipodal if the angle between them is $\pi$ (i.e. they lie opposite to each other on the unit sphere) and an antipodal triangle contains two antipodal points. Several results in spherical trigonometry (and in this paper) are derived by projections of points/edges to planes tangent to a point on the sphere; in all such cases the projection is from the centre of the sphere.

The following results are all standard from spherical trigonometry, see \cite{tod1886} for proofs and further details. The length of an arc belonging to a great circle corresponds with the angle of the arc, see Fig.~\ref{figuretansec}. Furthermore, given an arc between two points $u_1$ and $u_2$ on $\Sph$, the length of the line connecting $u_1$ and the projection of $u_2$ to the plane tangent to $u_1$ is given by $\tan(\zeta(u_1, u_2))$, see Fig.~\ref{figuretansec}.

\begin{figure}[ht]
\begin{minipage}[c]{0.45\linewidth}
\begin{center}
          \begin{tikzpicture}[scale=0.8]
\draw [bbcol, fill=bbcol] (0,0) -- (1.678, 2) -- (0, 2) -- cycle;
\draw [ggcol, fill=ggcol] (0,0) -- ++(50:2) -- (0, 2) -- cycle;

\draw[gray,->] (-2.5,0) -- (2.5,0);
\draw[gray,->] (0,-2.5) -- (0,2.5);

\draw [thin, gray] (0,0) circle (2cm);
\draw [thick, blue, fill=ggcol] (0,0) ++(90:2) arc (90:50:2);
\draw [thick, blue] (0,0) ++(90:0.4) arc (90:50:0.4);
\draw (0,0) -- (1.678, 2);
\draw (0,2) -- (1.678, 2);

\node[Darkgreen] at (0.8,2.3) {$\tan(\alpha)$};
\node[Darkgreen] at (1.3,0.7) {$\sec(\alpha)$};
\node[] at (-0.3,1) {$1$};
\node[] at (0.2, 0.6) {$\alpha$};
\node[] at (0.6, 1.6) {$\alpha$};
\node[] at (-0.3, 2.2) {$u_1$};
\node[] at (1.65, 1.56) {$u_2$};
\node[] at (1.578, -2.2) {$\alpha = \zeta(u_1, u_2)$};

          \end{tikzpicture}
        \end{center}
\caption{\footnotesize Angular calculations in the plane intersecting the great circle containing $(u_1, u_2)$}\label{figuretansec}
\end{minipage}
\hspace{1.5cm}
\begin{minipage}[c]{0.45\linewidth}

\begin{center}\label{ffig}
          \begin{tikzpicture}[scale=1.3]
            \pgfmathsetmacro\R{1} 
            \fill[ball color=white!10, opacity=0.1] (0,0,0) circle (\R); 
            \tdplotsetmaincoords{65}{120}
            \begin{scope}[tdplot_main_coords, shift={(0,0)}]
              \coordinate (O) at (0,0,0);


\tdplotdefinepoints(0,0,0)(1, 0, 0)(0,1,0)
\tdplotdrawpolytopearc[thick]{\R}{anchor=north}{\scriptsize $u_{12}$}

\tdplotdefinepoints(0,0,0)(0,1,0)(0, 0, 1)
\tdplotdrawpolytopearc[thick]{\R}{anchor=west}{\scriptsize $u_{23}$}

\tdplotdefinepoints(0,0,0)(1, 0, 0)(0,0,1)
\tdplotdrawpolytopearc[thick]{\R}{anchor=east}{\scriptsize $u_{13}$}

\tdplotdefinepoints(0,0,0)(0.7071, 0.7071, 0)(0.7071,0,  0.7071)
\tdplotdrawpolytopearc[blue]{\R}{anchor=east}{}

\tdplotdefinepoints(0,0,0)(0.7071, 0.7071, 0)(0, 0.7071, 0.7071)
\tdplotdrawpolytopearc[blue]{\R}{anchor=east}{}

\tdplotdefinepoints(0,0,0)(0.7071,0,  0.7071)(0, 0.7071, 0.7071)
\tdplotdrawpolytopearc[blue]{\R}{anchor=east}{}


\tdplotdefinepoints(0,0,0)(0.9239   ,      0   , 0.3827)(0.8165  ,  0.4082  ,  0.4082)
\tdplotdrawpolytopearc[red]{\R}{anchor=east}{}
\tdplotdefinepoints(0,0,0)(0.9239 ,   0.3827    ,     0)(0.8165  ,  0.4082  ,  0.4082)
\tdplotdrawpolytopearc[red]{\R}{anchor=east}{}
\tdplotdefinepoints(0,0,0)(0.8539 ,   0.3827    ,     0)(0.9239   ,      0   , 0.3827)
\tdplotdrawpolytopearc[red]{\R}{anchor=east}{}

\tdplotdefinepoints(0,0,0)(0.3827  ,       0  ,  0.9239)(0  ,  0.3827  ,  0.9239)
\tdplotdrawpolytopearc[red]{\R}{anchor=east}{}
\tdplotdefinepoints(0,0,0)(0.3827  ,       0  ,  0.9239)(0.4082  ,  0.4082 ,   0.8165)
\tdplotdrawpolytopearc[red]{\R}{anchor=east}{}
\tdplotdefinepoints(0,0,0)(0  ,  0.3827  ,  0.9239)(0.4082  ,  0.4082 ,   0.8165)
\tdplotdrawpolytopearc[red]{\R}{anchor=east}{}

\tdplotdefinepoints(0,0,0)(0,    0.9239 ,   0.3827)(0.3827  ,  0.9239    ,     0)
\tdplotdrawpolytopearc[red]{\R}{anchor=east}{}
\tdplotdefinepoints(0,0,0)(0,    0.9239 ,   0.3827)(0.4082 ,   0.8165  ,  0.4082)
\tdplotdrawpolytopearc[red]{\R}{anchor=east}{}
\tdplotdefinepoints(0,0,0)(0.3827  ,  0.9239    ,     0)(0.4082 ,   0.8165  ,  0.4082)
\tdplotdrawpolytopearc[red]{\R}{anchor=east}{}

\tdplotdefinepoints(0,0,0)(0.8165 ,   0.4082  ,  0.4082)(0.4082 ,   0.8165  ,  0.4082)
\tdplotdrawpolytopearc[red]{\R}{anchor=east}{}
\tdplotdefinepoints(0,0,0)(0.8165 ,   0.4082  ,  0.4082)(0.4082  ,  0.4082  ,  0.8165)
\tdplotdrawpolytopearc[red]{\R}{anchor=east}{}
\tdplotdefinepoints(0,0,0)(0.4082 ,   0.8165  ,  0.4082)(0.4082  ,  0.4082  ,  0.8165)
\tdplotdrawpolytopearc[red]{\R}{anchor=east}{}

\draw[] (1,-0.1,0) node[anchor=north] {\scriptsize $u_1$};
\draw[] (-0.1,1.1,0) node[anchor=north] {\scriptsize $u_2$};
\draw[] (0,0,1) node[anchor=east] {\scriptsize $u_3$};

\draw[] (0.9239   ,      0   , 0.3827) node[anchor=east] {\scriptsize $u_{113}$};
\draw[] (0.9165  ,  0.3582  ,  0.4082) node[anchor=south west] {\scriptsize $u_{1213}$};
\draw[] (0.9239 ,   0.3727    ,     0) node[anchor=north] {\scriptsize $u_{112}$};

\draw[] (0.3827  ,       0  ,  0.9239) node[anchor=east] {\scriptsize $u_{133}$};
\draw[] (0.1  ,  0.3827  ,  0.9239) node[anchor=south west] {\scriptsize $u_{323}$};
\draw[] (0.4082  ,  0.4082 ,   0.8165) node[anchor=north] {\scriptsize $u_{1323}$};

\draw[] (0,    0.9239 ,   0.3827) node[anchor=west] {\scriptsize $u_{232}$};
\draw[] (0.3827  ,  0.9239    ,     0) node[anchor=north] {\scriptsize $u_{122}$};
\draw[] (0.4082 ,   0.8165  ,  0.4082) node[anchor=south] {\scriptsize $u_{1223}$};

              \coordinate (X) at (3,0,0) ;
              \coordinate (Y) at (0,2,0) ;
              \coordinate (Z) at (0,0,2) ;

                       \draw[-latex] (1.62,0,0) -- (X) node[anchor=west] {$X$};
              \draw[-latex] (0,1.13,0) -- (Y) node[anchor=west] {$Y$};
              \draw[-latex] (0,0,1.1) -- (Z) node[anchor=west] {$Z$};
            \end{scope}
          \end{tikzpicture}
        \end{center}
\caption{\footnotesize $\sigma$ tessellations} \label{sigmafig}
\end{minipage}
\end{figure}

\begin{lemma}[The Spherical Laws of Sines and Cosines]\label{sphcosine}
Given a spherical triangle with sides $a, b, c$ and angles $A,B,C$ opposite to side $a,b,c$ resp., then: 
\begin{itemize}
\item[i)] $\cos(c)  =  \cos(a)\cos(b) + \sin(a)\sin(b)\cos(C);$ 
\item[ii)] $\cos(C)  =  -\cos(A)\cos(B) + \sin(A)\sin(B)\cos(c)$ and 
\item[iii)] $\frac{\sin(a)}{\sin(A)} = \frac{\sin(b)}{\sin(B)}$.
\end{itemize}
\end{lemma}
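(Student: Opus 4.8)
The plan is to prove the side law of cosines (i) directly from a tangent-plane projection, and then obtain the sine law (iii) and the polar (angle) law of cosines (ii) as an algebraic, respectively dual, consequence of (i), so that only one genuinely geometric computation is required.

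For (i), I would work at the vertex carrying angle $C$, call it $V$, whose two incident arcs are the sides $a$ and $b$ meeting at angle $C$. Following the setup of Fig.~\ref{figuretansec}, I project from the centre $O$ of $\Sph$ onto the plane tangent at $V$: the two arcs from $V$ become straight tangent segments, and the rays through the opposite vertices $V_A, V_B$ meet these tangents at points $A', B'$ with $|VA'| = \tan b$, $|VB'| = \tan a$ and $|OA'| = \sec b$, $|OB'| = \sec a$, exactly the tangent/secant relations of Fig.~\ref{figuretansec}. I then apply the planar law of cosines twice: in the triangle $A'VB'$, whose angle at $V$ is precisely $C$,
\[
|A'B'|^2 = \tan^2 a + \tan^2 b - 2\tan a\tan b\cos C,
\]
and in the triangle $A'OB'$, whose angle at $O$ equals the central angle $c$ subtended by the side opposite $C$,
\[
|A'B'|^2 = \sec^2 a + \sec^2 b - 2\sec a\sec b\cos c.
\]
Equating the two right-hand sides, using $\sec^2 = 1 + \tan^2$ to cancel the quadratic terms, and finally multiplying through by $\cos a\cos b$ converts every $\tan$ and $\sec$ into $\sin$ and $\cos$ and yields (i) after routine simplification.

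For (iii) I would avoid further geometry and argue by symmetry from (i). Solving (i) for $\cos C$ and forming $\sin^2 C = 1 - \cos^2 C$ gives
\[
\sin^2 C = \frac{\sin^2 a\sin^2 b - (\cos c - \cos a\cos b)^2}{\sin^2 a\sin^2 b}.
\]
Dividing by $\sin^2 c$ and expanding the numerator shows it equals the expression $1 - \cos^2 a - \cos^2 b - \cos^2 c + 2\cos a\cos b\cos c$, which is fully symmetric in $a,b,c$. Hence $\sin^2 C/\sin^2 c$ is invariant under permuting the three sides, so $\sin A/\sin a = \sin B/\sin b = \sin C/\sin c$, and the relevant equality rearranges to (iii).

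For (ii) I would pass to the polar (dual) triangle, whose sides and angles satisfy $a' = \pi - A$, $b' = \pi - B$, $c' = \pi - C$ and $A' = \pi - a$, $B' = \pi - b$, $C' = \pi - c$. Applying the already-proved identity (i) to the polar triangle and substituting these relations — so each $\cos$ flips sign while each $\sin$ is unchanged — collapses directly to (ii). The main obstacle is not the algebra but justifying this duality: one must construct the polar triangle from the poles of the three great circles carrying the sides of the original triangle and verify the complementary relations between the two triangles' parts. This is the only step requiring real care; everything else reduces to the single projection computation for (i) together with elementary trigonometric identities.
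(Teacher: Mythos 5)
The paper gives no proof of this lemma: it is stated as standard and delegated to Todhunter \cite{tod1886}, so there is no internal argument to compare against. Your proposal is essentially a correct reconstruction of the classical textbook proof rather than a different route: the tangent/secant projection at the vertex carrying $C$ is exactly the construction underlying Fig.~\ref{figuretansec} (and is Todhunter's own derivation of the side cosine rule), the sine rule via the symmetric quantity $1-\cos^2 a-\cos^2 b-\cos^2 c+2\cos a\cos b\cos c$ is standard, and obtaining the angle cosine rule from the polar triangle is the canonical dualisation. Your algebra checks out at each step: equating the two planar cosine-rule expressions for $|A'B'|^2$ and using $\sec^2=1+\tan^2$ gives $\sec a\sec b\cos c = 1+\tan a\tan b\cos C$, which multiplied by $\cos a\cos b$ is (i); and the polar substitution $a'=\pi-A$, $C'=\pi-c$ etc.\ turns (i) into (ii) as you say.

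Two points need attention before the argument is fully rigorous. First, the projection proof of (i) tacitly assumes $a,b<\pi/2$: if a side equals $\pi/2$ the tangent and secant are undefined, and if it exceeds $\pi/2$ the ray through the opposite vertex meets the tangent plane on the far side of $V$, so $|VA'|=\tan b$ and $|OA'|=\sec b$ acquire signs and the two planar triangles must be re-examined. The identity does hold for all sides in $(0,\pi)$, which is the paper's standing assumption, so you must either track signs, pass to a colunar triangle, or (cleanest) prove (i) vectorially as $\cos c = u_A\cdot u_B$ expanded in an orthonormal frame at the vertex $V$, which handles all cases uniformly. Second, your derivation of (iii) takes a square root of $\sin^2 C/\sin^2 c$, so you need all sines strictly positive; this is guaranteed for nondegenerate triangles with sides and angles strictly in $(0,\pi)$ but should be stated. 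Finally, as you yourself flag, the polar-triangle relations used for (ii) require constructing the dual triangle from the poles of the three great circles (choosing for each side the pole lying on the same side as the opposite vertex) and verifying the supplementary relations; this is standard and unproblematic, but it is the one genuinely geometric ingredient beyond the projection computation and should be carried out, not merely invoked.
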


\noindent
The sum of angles within a spherical triangle is between $\pi$ (as the area approaches zero) and $3\pi$ (as the triangle fills the whole sphere). The \emph{spherical excess} of a triangle is the sum of its angles minus $\pi$ radians.

\begin{theorem}[Girard's theorem]\label{girard}
The area of a spherical triangle is equal to its spherical excess.
\end{theorem}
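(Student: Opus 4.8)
The plan is to prove Girard's theorem by the classical \emph{lune decomposition} argument on the unit sphere $\Sph$, whose total surface area is $4\pi$. A \emph{lune} is the region of $\Sph$ bounded between two great circles; since two distinct great circles meet in a pair of antipodal points, a lune is a wedge spanning from one such point to its antipode. My first step is to compute the area of a lune as a function of its opening angle $\theta$ (the angle between the two bounding great circles, measured in the tangent plane at the vertex). By the rotational symmetry of $\Sph$ this area depends only on $\theta$ and is proportional to it, so it equals $k\theta$ for some constant $k$. Taking $\theta = 2\pi$ recovers the whole sphere of area $4\pi$, forcing $k = 2$; hence a lune of angle $\theta$ has area $2\theta$.

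Next, given a spherical triangle $T = \langle u_1, u_2, u_3\rangle$ with interior angles $A, B, C$, I extend each of its three edges to the full great circle that contains it, obtaining three great circles $C_1, C_2, C_3$. In general position these partition $\Sph$ into eight spherical triangles: the triangle $T$ itself, its antipodal image $\bar T$ under the map $x \mapsto -x$ (which is an isometry, so $\mathrm{area}(\bar T) = \mathrm{area}(T)$), and six further faces. At each vertex of $T$ the two great circles meeting there bound a lune whose opening angle is exactly the interior angle of $T$ at that vertex and which contains $T$; call these three lunes $L_A, L_B, L_C$, of areas $2A, 2B, 2C$. Their antipodal images $\bar L_A, \bar L_B, \bar L_C$, of the same areas, contain $\bar T$.

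The combinatorial heart of the argument, which I expect to be the main obstacle, is to pin down the exact covering multiplicities of these six lunes. The point is that the third great circle slices each lune $L_A$ into precisely two of the eight faces, one of which is $T$; thus each of $L_A, L_B, L_C$ contains $T$ together with one distinct neighbouring face, so across the three lunes the face $T$ is covered three times and its three neighbours once each. The antipodal statement gives the same for $\bar L_A, \bar L_B, \bar L_C$ and $\bar T$. In general position the three neighbours of $T$ and the three neighbours of $\bar T$ are mutually distinct and, together with $T$ and $\bar T$, exhaust all eight faces exactly once. Hence the six lunes cover $\Sph$, overcounting only $T$ and $\bar T$ by two each, which yields $\mathrm{area}(L_A)+\mathrm{area}(L_B)+\mathrm{area}(L_C)+\mathrm{area}(\bar L_A)+\mathrm{area}(\bar L_B)+\mathrm{area}(\bar L_C) = 4\pi + 2\,\mathrm{area}(T) + 2\,\mathrm{area}(\bar T)$.

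Finally I substitute the lune areas and use $\mathrm{area}(\bar T) = \mathrm{area}(T)$. The left-hand side equals $4A + 4B + 4C$, so the identity becomes
\[
4(A+B+C) = 4\pi + 4\,\mathrm{area}(T),
\]
which rearranges to $\mathrm{area}(T) = A + B + C - \pi$, exactly the spherical excess of $T$. This completes the proof, and the only genuinely delicate point is the general-position bookkeeping establishing that the eight faces are distinct and that the overcount is precisely two at $T$ and at $\bar T$.
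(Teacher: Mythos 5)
Your proof is correct, but there is nothing in the paper to compare it against: the paper states Girard's theorem without proof, treating it as one of the standard facts of spherical trigonometry for which it refers the reader to \cite{tod1886}. The lune decomposition you give is precisely the classical argument found in such references, so you have supplied the standard proof the paper deliberately omits. One refinement: the ``general position'' bookkeeping you flag as the delicate point is in fact automatic here and can be made fully explicit. For a triangle of positive area whose edge lengths lie strictly in $(0,\pi)$ (the standing assumption of Section~\ref{sec-not}), the three great circles are pairwise distinct and no two of them share a common pair of antipodal intersection points, since either degeneracy would force the triangle to be degenerate. Writing $T = \langle u_1, u_2, u_3\rangle$, the lune of angle $A$ at $u_1$ is cut by the great circle through $u_2, u_3$ exactly into $T$ and $\langle -u_1, u_2, u_3\rangle$, and similarly at the other two vertices; the eight faces are then precisely $T$, $\bar T$, the three faces $\langle -u_1, u_2, u_3\rangle$, $\langle u_1, -u_2, u_3\rangle$, $\langle u_1, u_2, -u_3\rangle$, and the antipodes of these three, so the covering multiplicities (three for $T$, three for $\bar T$, one for each of the remaining six) can simply be read off rather than argued abstractly. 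With that observation your identity $4(A+B+C) = 4\pi + 4\,\mathrm{area}(T)$, hence $\mathrm{area}(T) = A+B+C-\pi$, is complete.
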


\subsection{Online point placing on the unit sphere}

Our aim is to insert a sequence of `uniformly distributed' points onto $\Sph$ in an online manner. After placing a point, it cannot be moved in the future. Let $p_i$ be the $i$'th point thus inserted and let $S_i = \{p_1, p_2, \ldots, p_i\}$ be the configuration after inserting the $i$'th point. Teramoto et al.  introduced the gap ratio~\cite{Teramoto}, which defines a measure of uniformity for point samples and we use this metric (similarly to \cite{walcom17}). 

Let $\rho_{\text{min}}: 2^{\Sph} \to \R$ denote the minimal distance between a set of points $S_i$, defined by $\rho_{\text{min}}(S_i) = \min_{p, q \in S_i, p \neq q} \dist{p, q}$. Recall that notation $2^{\Sph}$ means the set of all points lying on the $2$-sphere $\Sph$.  Let $\rho^{\Sph'}_{\text{max}}: 2^{\Sph} \to \R$ denote the maximal spherical diameter of the largest empty circle centered at some point of $\Sph' \subseteq \Sph$ not intersecting the given set of points $S_i$, defined by 
$\rho^{\Sph'}_{\text{max}}(S_i) = \max_{p \in \Sph'}\min_{q \in S_i} 2\cdot \dist{p, q}$. We then define $\rho^{\Sph'}(S_i) = \frac{\rho^{\Sph'}_{\text{max}}(S_i)}{\rho_{\text{min}}(S_i)}$ to be the \emph{gap ratio} of $S_i$ over $\Sph'$. When $\Sph' = \Sph$ (i.e. when points can be placed anywhere on the sphere), we define that $\rho(S_i) = \rho^{\Sph}(S_i)$.

We denote an \emph{equilateral} spherical triangle as one for which each side has the same length. By Lemma~\ref{sphcosine} (the spherical law of cosines), having three equal length edges implies that an equilateral spherical triangle has the same three angles. By Theorem~\ref{girard} (Girard's theorem), each such angle is greater than $\frac{\pi}{3}$ (for an equilateral triangle of positive area). Let $\Delta \subseteq \Sph$ denote the set of all spherical triangles on the unit sphere.

Consider a spherical triangle $T \in \Delta$. We define a \emph{triangular dissection} function $\sigma: \Delta \to 2^{\Delta}$ in the following way. If $T \in \Delta$ is defined by $T = \langle u_1, u_2, u_3\rangle$, then $\sigma(T) = \{T_1, T_2, T_3, T_4\} \subset \Delta$, where $T_1 = \langle u_1, u_{12}, u_{13}\rangle$, $T_2 = \langle u_{12}, u_{2}, u_{23}\rangle$, $T_3 = \langle u_3, u_{13}, u_{23}\rangle$ and $T_4 = \langle u_{12}, u_{13}, u_{23}\rangle$, with $u_{ij}$ being the midpoints (on the unit sphere) of the arc connecting $u_i$ and $u_j$ (see Fig.~\ref{sigmafig}). Define $\sigma_E(T)$ as the set of nine induced edges: $\{(u_1, u_{12}), (u_{12}, u_{2}), (u_{2}, u_{23}),$ $(u_{23}, u_{3}),$ $(u_{3}, u_{13}),$ $(u_{13}, u_{1}),$ $(u_{13}, u_{23}),$ $(u_{23}, u_{12}),$ $(u_{12}, u_{13})\}$.

We extend the domain of $\sigma$ to sets of spherical triangles so that $\sigma(\{T_1, T_2, \ldots, T_k\}) = \{\sigma(T_1), \sigma(T_2), \ldots, \sigma(T_k)\}$; thus $\sigma: 2^{\Delta} \to 2^{\Delta}$. Given a spherical triangle $T \in\Delta$, we then define that $\sigma^1(T) = \sigma(T)$ and $\sigma^k(T) = \sigma(\sigma^{k-1}(T))$ for $k > 1$. For notational convenience, we also define that $\sigma^0(T) = T$ (the identity tessellation). We similarly extend $\sigma_E(T)$ to a set of triangles: $\sigma_E(\{T_1, T_2, \ldots, T_k\}) = \{\sigma_E(T_1), \sigma_E(T_2), \ldots, \sigma_E(T_k)\}$ and let $\sigma_E^k(T) =  \sigma_E(\sigma^k(T_1), \sigma^k(T_2), \ldots, \sigma^k(T_k))$. See Fig.~\ref{sigmafig} for an example showing the tessellation of $T$ to depth $2$ (e.g. $\sigma^2(T)$) and the set of edges $\sigma_E^2(T)$.

Let $\mu: \Delta \to 2^{\Sph}$ be a function which, for an input spherical triangle, returns the (unique) set of three points defining that triangle. For example, given a spherical triangle $T = \langle p_1, p_2, p_3 \rangle$, then $\mu(T) = \{p_1, p_2, p_3\}$. Clearly $\mu$ may be extended to sets of triangles by defining that $\mu(\{T_1, \ldots, T_k\}) = \{\mu(T_1), \ldots, \mu(T_k)\}$; thus $\mu: 2^{\Delta} \to 2^{\Sph}$. When there is no danger of confusion, by abuse of notation, we sometimes write $T$ rather than $\mu(T)$. This allows us to write $\rho(T)$ (or $\rho(\sigma^k(T))$) for example, as the gap ratio of the three points defining spherical triangle $T$ (resp. the set of points in the $k$-fold triangular dissection $\sigma^k(T)$).

We will also require an ordering on the set of points generated by a tessellation $\sigma^k(T)$. Essentially, we wish to order the points as those of $\sigma^0(T) = T$ first (in any order), then those of $\sigma^1(T)$ in any order \emph{but omitting the points of $\sigma^0(T) = T$}, then the points of $\sigma^2(T)$, omitting points in triangles of $\sigma^0(T)$ or $\sigma^1(T)$ etc. To capture this notion, we introduce a function $\tau: \Delta \times \Z^+ \to 2^{\Sph}$ defined thus:
$$\tau(T,k) = \left\{ \begin{array}{ll} \mu(\sigma^k(T)) - \mu(\sigma^{k-1}(T))& ; \text{ if } k \geq 1 \\ \mu(T) & ;  \text{ if } k = 0 \end{array} \right.
$$ 

As an example, in Fig.~\ref{sigmafig},  
$\tau(T, 0)  =  \{u_1, u_2, u_3\}$,  $\tau(T, 1)  =  \{u_{12}, u_{13}, u_{23}\}$, and
$\tau(T, 2)  =  \{u_{112}, u_{122}, u_{232}, u_{323}, u_{133}, u_{113}, u_{1323}, u_{1213}, u_{1223}\}$.
By abuse of notation, we redefine $\sigma^k(T)$ such that $\sigma^k(T) = \tau(T, 0) \cup \tau(T, 1) \cup \cdots \cup \tau(T, k)$ is an ordered set, ordered by points $\tau(T, 0)$ first and $\tau(T, k)$ last (points within any $\tau(T, j)$ for $1 \leq j \leq k$ are in any order). 

\section{Overview of Online Vertex Insertion Algorithm}\label{overviewsec}

Our algorithm is a two stage strategy. In stage one, we project the $12$ vertices of the regular icosahedron onto the unit sphere. The first two points inserted should be opposite each other (antipodal points), but the remaining $10$ points can be inserted in any order, giving a stage one gap ratio of $\frac{\pi}{\arccos{\left( \frac{1}{\sqrt{5}}\right)}} \approx 2.8376$.

In the second stage, we treat each of the $20$ equilateral spherical triangles of the regular icosahedron in isolation. We show in Lemma~\ref{mainlem} that the gap ratio for our tessellation is `local' and depends only on the local configuration of vertices around a given point. This allows us to consider each triangle separately. During stage two, we use the fact that these twenty spherical triangles are equilateral and apply Lemma~\ref{gaprat} to independently tesselate each triangle recursively in order to derive an upper bound of the gap ratio in stage two of $
\frac{2(3-\sqrt{5})}{\arcsin{\left(\frac{1}{2}\sqrt{2-\frac{2}{\sqrt{5}}}\right)}} \approx 2.760
$.

We note here that the radius of the sphere does not affect the gap ratio of the point insertion problem, and thus we assume a unit sphere throughout.

The algorithmic procedure to generate an infinite set of points is shown in Algorithm~\ref{thealg}. To generate a set of $k$ points $\{p_1, p_{2}, \ldots, p_{k}\}$, we choose the first $k$ points generated by the algorithm. 

\hrulefill

\begin{algorithm}[H]
\caption{Placing infinitely many points on the unit sphere using our recursive tessellation procedure on the regular icosahedron.} \label{thealg}

\begin{algorithmic}
    \STATE \textbf{Stage one:} Project $12$ vertices of the icosahedron to the unit sphere:
    \bindent
	\STATE{Place two antipodal points on the unit sphere.}
	\STATE{Place the remaining ten points in any order.}
	\STATE{Arbitrarily label the $20$ minimal spherical triangles $T = \{T_1, \ldots, T_{20}\}$.}
    \eindent
    \STATE \textbf{Stage two:} Recursively tessellate minimal triangles
    \bindent
    	\STATE {Let $T' \leftarrow T$}
	\WHILE {TRUE}
		\FORALL{minimal spherical triangles $R \in T$}
   		\STATE {Let $T' \leftarrow (T' \cup \sigma(R)) - R$}
    		\ENDFOR
		\STATE {Let $T \leftarrow T'$}
	\ENDWHILE
    \eindent
\vspace{0.2cm}
\end{algorithmic}
\end{algorithm}

\hrulefill

\section{Gap ratio of equilateral spherical triangles}\label{gap}

We will require several lemmata regarding tessellations of spherical triangles. The following lemma is trivial from the spherical sine rule and Girard's theorem.

\begin{lemma}\label{eqieqilem}
Let $T \in \Delta$ be an equilateral triangle. Then the central triangle in the tessellation $\sigma(T)$ is also equilateral.
\end{lemma}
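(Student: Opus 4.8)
The plan is to reduce the claim to an application of the spherical law of cosines to the three ``corner'' subtriangles and then exploit the symmetry forced by $T$ being equilateral. Write $T = \langle u_1, u_2, u_3 \rangle$ and let $a = \zeta(u_1,u_2) = \zeta(u_1,u_3) = \zeta(u_2,u_3)$ be the common side length. As already observed in the text, Lemma~\ref{sphcosine} forces an equilateral triangle to be equiangular, so let $\theta$ denote the common angle of $T$ at each of $u_1, u_2, u_3$. The central triangle is $T_4 = \langle u_{12}, u_{13}, u_{23}\rangle$, and it suffices to show its three sides $\zeta(u_{12}, u_{13})$, $\zeta(u_{12}, u_{23})$ and $\zeta(u_{13}, u_{23})$ are equal.

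First I would record the defining property of the midpoints: since $u_{ij}$ is the midpoint of the arc joining $u_i$ and $u_j$, we have $\zeta(u_i, u_{ij}) = \zeta(u_j, u_{ij}) = a/2$. The essential observation is that each corner triangle, e.g.\ $\langle u_1, u_{12}, u_{13}\rangle$, has two sides of length $a/2$ meeting at $u_1$ with included angle exactly $\theta$. Here the included angle is genuinely $\theta$ because $u_{12}$ lies on the great-circle arc from $u_1$ to $u_2$ and $u_{13}$ lies on the great-circle arc from $u_1$ to $u_3$; hence the tangent directions at $u_1$ toward $u_{12}$ and $u_{13}$ coincide with those toward $u_2$ and $u_3$, so the angle of the corner triangle at $u_1$ equals the angle of $T$ at $u_1$. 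Applying Lemma~\ref{sphcosine}(i) to this corner triangle, with the side $\zeta(u_{12}, u_{13})$ opposite the vertex $u_1$, gives
\[
\cos\!\big(\zeta(u_{12}, u_{13})\big) = \cos^2\!\big(\tfrac{a}{2}\big) + \sin^2\!\big(\tfrac{a}{2}\big)\cos\theta .
\]

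The same computation applies verbatim to the corner triangles at $u_2$ and $u_3$: each has two sides of length $a/2$ and included angle $\theta$, so $\zeta(u_{12}, u_{23})$ and $\zeta(u_{13}, u_{23})$ satisfy the identical equation. Since the right-hand side is the same in all three cases and $\cos$ is injective on $(0,\pi)$, the three sides of $T_4$ are equal, so $T_4$ is equilateral. The only genuinely non-routine point is the angle-preservation claim of the previous paragraph; I expect this to be the main (though still minor) obstacle, and it can be justified either by the tangent-direction argument above or, equivalently, by noting that the three corner triangles are congruent by a side--angle--side argument (two sides $a/2$ and included angle $\theta$), which immediately equates their third sides.
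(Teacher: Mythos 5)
Your proof is correct and follows essentially the same route as the paper, which likewise deduces equiangularity from the spherical cosine rule and then applies that rule to conclude the central triangle has equal sides. Your write-up simply makes explicit the side--angle--side computation (two sides of length $a/2$ with included angle $\theta$ at each corner) that the paper's terser proof leaves implicit.
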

\begin{proof}
Consider Fig.~\ref{sigmafig}. The lemma claims that if $\langle u_1, u_2, u_3 \rangle$ is equilateral, then so is $\langle u_{12}, u_{13}, u_{23} \rangle$ (and therefore also $\langle u_{1213}, u_{1223}, u_{1323} \rangle$). 

As a consequence of the spherical cosine rule, an equilateral spherical triangle will have three equal interior angles, each of which is larger than $\frac{\pi}{3}$ (otherwise, by Girard's theorem, it has zero area). Since the edge lengths of $T$ are identical, then the central triangle of $\sigma(T)$ also has equal length edges, again by the spherical cosine rule.
\end{proof}

It is worth emphasising in Lemma~\ref{eqieqilem} that the other three triangles in the triangular dissection of an equilateral triangle are \emph{not} equilateral, and have a strictly smaller area than the central triangle. This \emph{deformation} of the recursive triangular dissection makes the analysis of the algorithm nontrivial. The following lemma equates the distance from the centroid of an equilateral spherical triangle to a vertex of that triangle.

\begin{lemma}\label{centroidlem}
Let $T = \langle u_1, u_2, u_3 \rangle \in \Delta$ be an equilateral triangle with centroid $u_c$ and edge length $\zeta(u_1, u_2) = \alpha$. Then $\zeta(u_1, u_c) = \zeta(u_2, u_c) = \zeta(u_3, u_c) = \arcsin{\left(2\sin(\frac{\alpha}{2})/\sqrt{3}\right)}$.
\end{lemma}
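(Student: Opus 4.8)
The plan is to decompose the equilateral triangle into three congruent isosceles sub-triangles meeting at the centroid, and then apply the spherical law of cosines (Lemma~\ref{sphcosine}) to just one of them. This turns a statement about the centroid into a single trigonometric identity in one isosceles triangle.

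First I would join the centroid $u_c$ to each of the three vertices, splitting $T$ into the sub-triangles $\langle u_1, u_2, u_c\rangle$, $\langle u_2, u_3, u_c\rangle$ and $\langle u_3, u_1, u_c\rangle$. By the threefold rotational symmetry of an equilateral spherical triangle these three sub-triangles are congruent, so they subtend equal angles at $u_c$; since these angles fill out the full turn around the interior point $u_c$, each one equals $\frac{2\pi}{3}$. Congruence also gives $\zeta(u_1, u_c) = \zeta(u_2, u_c) = \zeta(u_3, u_c) =: r$, so it remains only to compute $r$ from one sub-triangle, say $\langle u_1, u_2, u_c\rangle$, which is isosceles with two equal sides of length $r$, base $\zeta(u_1, u_2) = \alpha$, and apex angle $\frac{2\pi}{3}$ at $u_c$.

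Next I would apply part (i) of Lemma~\ref{sphcosine} to $\langle u_1, u_2, u_c\rangle$, taking the side of length $\alpha$ to be opposite the apex angle $\frac{2\pi}{3}$: this gives $\cos\alpha = \cos^2 r + \sin^2 r \cos\frac{2\pi}{3} = \cos^2 r - \tfrac{1}{2}\sin^2 r$. Substituting $\cos^2 r = 1 - \sin^2 r$ reduces this to $\cos\alpha = 1 - \tfrac{3}{2}\sin^2 r$, that is, $\sin^2 r = \tfrac{2}{3}(1 - \cos\alpha)$. Finally the half-angle identity $1 - \cos\alpha = 2\sin^2\frac{\alpha}{2}$ yields $\sin^2 r = \tfrac{4}{3}\sin^2\frac{\alpha}{2}$, and taking the positive square root gives $r = \arcsin\!\left(2\sin\tfrac{\alpha}{2}/\sqrt{3}\right)$, as claimed.

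The only genuinely non-routine step is the geometric claim that the apex angle at $u_c$ equals $\frac{2\pi}{3}$; everything after it is a short calculation. I would justify that claim by the symmetry of the equilateral triangle (so the three sub-triangles are congruent and subtend equal angles at $u_c$) together with the fact that the angles around an interior point sum to $2\pi$. I would also briefly confirm that $r$ lies in the range where $\arcsin$ is the correct inverse and that the sub-triangles are non-degenerate; both follow from the standing constraint that edge lengths lie strictly between $0$ and $\pi$, which keeps $u_c$ in the interior of $T$.
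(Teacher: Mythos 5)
Your proof is correct, and it takes a genuinely different (if closely related) route from the paper's. The paper works in Fig.~\ref{gaprat} with the \emph{right} spherical triangle $\langle u_1, u_{12}, u_c \rangle$ cut off by the median: the angle at the edge midpoint $u_{12}$ is $\pi/2$, the half-apex angle at $u_c$ is $\frac{2\pi}{6}$, and a single application of the spherical sine rule (Lemma~\ref{sphcosine}(iii)) gives $\sin x = \sin(\alpha/2)/\sin(\pi/3) = 2\sin(\alpha/2)/\sqrt{3}$ in one line. You instead keep the whole isosceles triangle $\langle u_1, u_2, u_c \rangle$ with apex angle $\frac{2\pi}{3}$ and apply the side cosine rule (Lemma~\ref{sphcosine}(i)), then reduce via $1-\cos\alpha = 2\sin^2\frac{\alpha}{2}$; the computation checks out. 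Both arguments hinge on the same symmetry fact, that the three angles at $u_c$ are each $\frac{2\pi}{3}$, but yours needs \emph{only} that, whereas the paper's additionally uses the perpendicularity of the median at the midpoint of the opposite edge; in exchange, the paper's sine-rule step is shorter than your cosine-rule algebra. One shared caveat worth tightening in your write-up: passing from $\sin r = 2\sin(\alpha/2)/\sqrt{3}$ to $r = \arcsin\left(2\sin(\alpha/2)/\sqrt{3}\right)$ requires $r \leq \pi/2$, equivalently $\alpha \leq \frac{2\pi}{3}$ (for larger $\alpha$ the arcsin argument exceeds $1$ and the stated formula fails); the constraint you cite, that edge lengths lie strictly in $(0,\pi)$, does not by itself guarantee this. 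The paper's proof makes the same implicit restriction, and the lemma is only ever invoked with $\alpha = \arccos(1/\sqrt{5}) \approx 1.107 < \frac{2\pi}{3}$, so nothing breaks in either version.
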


\begin{proof}
Consider Fig.~\ref{gaprat}. The centroid of $T$ is, as for standard triangles, the unique point $u_c$ of $T$ from which the (spherical) distance satisfies $\zeta(u_c, u_1) = \zeta(u_c, u_2) = \zeta(u_c, u_3)$. Let then $x = \zeta(u_c, u_1)$.  By the sine rule of spherical trigonometry:
$$
\frac{\sin(x)}{\sin(\pi/2)} = \frac{\sin(\alpha/2)}{\sin(2\pi/6)},
$$
and since $\sin{\left(\frac{\pi}{3}\right)} = \frac{\sqrt{3}}{2}$, then $x = \arcsin{\left(\frac{2\sin(\frac{\alpha}{2})}{\sqrt{3}}\right)}$.
\end{proof}

Given an equilateral spherical triangle $T$, we will also need to determine the maximal and minimal edge lengths in $\sigma_E^k(T)$ for $k \geq 1$, which we now show. 

\begin{lemma}\label{minmaxlem}
Let $T = \langle u_1, u_2, u_3 \rangle \in \Delta$ be an equilateral triangle such that $\alpha = \zeta(u_1, u_2) \in (0, \frac{\pi}{2}]$ and $k \geq 1$. Then the minimal length edge in $\sigma_E^k(T)$ is given by any edge lying on the boundary of $T$. The maximal length edge of $\sigma_E^k(T)$ is any of the edges of the central equilateral triangle of $\sigma^k(T)$.
\end{lemma}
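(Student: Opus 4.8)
The plan is to reduce everything to a single geometric operation — the spherical midline produced by $\sigma$ — and to track the extreme edge lengths through the recursion by induction on $k$. First I would record the one identity needed. If a spherical triangle has a vertex $a$ with adjacent edges of lengths $b',c'$ meeting at angle $A$ and opposite edge $a'$, then the arc $m_a$ joining the midpoints of the two adjacent edges (an edge of the central triangle of $\sigma$, i.e. the ``midline opposite $a$'') satisfies, by Lemma~\ref{sphcosine}(i) applied to the sub-triangle at $a$,
\[ \cos m_a = \cos(b'/2)\cos(c'/2) + \sin(b'/2)\sin(c'/2)\cos A. \]
Expanding $\cos a'$ by the same rule and simplifying yields the clean identity $\cos^2(a'/2) - \cos^2 m_a = \sin^2(b'/2)\sin^2(c'/2)\sin^2 A \ge 0$, which gives the ``inflation'' fact that every midline strictly exceeds half its opposite edge: $m_a > a'/2$.

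For the minimum I would show by induction that every edge of every triangle in $\sigma^k(T)$ has length at least $\alpha/2^k$, with equality precisely on $\partial T$. The boundary is recursively bisected (the midpoint splits an arc into two equal arcs), so after $k$ steps each side of $T$ is cut into $2^k$ arcs of length exactly $\alpha/2^k$. A single dissection of any triangle produces only halves of its three edges and its three midlines; the halves are at least half the minimal edge, and by the inflation fact each midline exceeds half of \emph{its own} opposite edge, hence also at least half the minimal edge. Thus the minimal edge length exactly halves at each step, attaining $\alpha/2^k$ on $\partial T$.

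The maximum is the substantive part. Define $\ell(M)$ by $\cos\ell(M) = 1 - \tfrac12\tan^2(M/2)$; this is the common edge of the central triangle of an equilateral triangle of edge $M$ (cf.\ Lemma~\ref{eqieqilem}), and on $(0,\pi/2]$ it is increasing with $M/2 < \ell(M) < M$. Writing $\ell_0=\alpha$ and $\ell_k=\ell(\ell_{k-1})$, the always-central chain of Lemma~\ref{eqieqilem} is an equilateral triangle in $\sigma^k(T)$ of edge exactly $\ell_k$, so it remains to prove that no edge of $\sigma_E^k(T)$ exceeds $\ell_k$. I would induct on the invariant ``every level-$k$ edge is at most $\ell_k$'': halves are at most $\ell_{k-1}/2 < \ell(\ell_{k-1}) = \ell_k$, so the whole burden falls on bounding the midlines.

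The crux, and the main obstacle, is therefore the inequality that \emph{any} spherical triangle with all edges at most $M\le\pi/2$ has every midline at most $\ell(M)$ — i.e.\ the equilateral triangle maximizes the midline. I would prove this by fixing $m_a$ and noting from the formula that $\cos m_a$ is increasing in $\cos A$, while the constraint $a'\le M$ forces $\cos A$ downward; hence the worst case has $a'=M$. Eliminating $A$ via Lemma~\ref{sphcosine}(i) and substituting $u=\cos^2(b'/2)$, $v=\cos^2(c'/2)$, $s=\sin^2(M/2)\in(0,\tfrac12]$ collapses the expression to
\[ \cos m_a = \frac{u+v-s}{2\sqrt{uv}}, \qquad u,v \in [1-s,\,1). \]
For fixed $u+v$ this is minimized at $u=v$ by AM--GM, and on the diagonal it decreases as $u$ decreases, so over the feasible box the minimum is at $u=v=1-s$, the equilateral configuration, where it equals $\cos\ell(M)$. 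Hence $\cos m_a \ge \cos\ell(M)$, i.e.\ $m_a\le\ell(M)$; the hypothesis $M\le\pi/2$ is exactly what keeps $s\le\tfrac12$ and all quantities in range. Feeding this back into the induction (using that $\ell$ is increasing) closes the bound, and since the central chain attains $\ell_k$, the maximal edge of $\sigma_E^k(T)$ is precisely an edge of the central equilateral triangle.
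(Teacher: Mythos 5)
Your proof is correct, and it takes a genuinely different --- and substantially more complete --- route than the paper's. The paper's proof only carries out the depth-one equilateral computation: it derives $\sin(y/2)=\tfrac12\tan(\alpha/2)$ for the central edge of $\sigma(T)$, shows $y>\alpha/2$ by a derivative estimate on $2\arcsin\left(\tfrac12\tan\left(\tfrac{\alpha}{2}\right)\right)$, and then asserts in a single closing sentence that for every depth $k$ the maximum is attained by the central equilateral triangle and the minimum on the boundary. Since the non-central triangles of $\sigma(T)$ are \emph{not} equilateral (as the paper itself stresses right after Lemma~\ref{eqieqilem}), that closing sentence leaves the real issue open: one must rule out that a midline of some isoceles descendant exceeds the edge of the nested central chain. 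Your extremal lemma --- any spherical triangle with all edges at most $M\le\pi/2$ has every midline at most $\ell(M)$, proved by the worst-case reduction $a'=M$ and the collapse to $\cos m_a=\frac{u+v-s}{2\sqrt{uv}}$ on the box $u,v\in[1-s,1)$, minimised at the equilateral corner $u=v=1-s$ --- is exactly the missing ingredient, and I verified both it and your identity $\cos^2(a'/2)-\cos^2 m_a=\sin^2(b'/2)\sin^2(c'/2)\sin^2 A$ (the $\cos A$ cross-terms cancel, leaving $\sin^2(b'/2)\sin^2(c'/2)(1-\cos^2A)$); the latter also replaces the paper's calculus argument for $y>\alpha/2$ by an exact algebraic statement valid for arbitrary triangles, which is what powers your minimum argument. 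The induction bookkeeping is sound: each $\sigma$-step produces only halves and midlines, $\ell$ is increasing on $(0,\pi/2]$ with $M/2<\ell(M)<M$, so the chain $\ell_k$ stays in range and is attained by the nested central triangle via Lemma~\ref{eqieqilem}. What each approach buys: the paper's argument is short and produces the explicit value $y=2\arcsin\left(\tfrac12\tan\left(\tfrac{\alpha}{2}\right)\right)$ reused in Lemma~\ref{mainlem}, but as written it is a sketch whose depth-$k$ claim is unproved; yours is longer but closes that gap for all $k$ and isolates a reusable fact (equilateral triangles maximise midlines under an edge cap) that would be the natural starting point for the non-equilateral extension the paper's conclusion contemplates. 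Two micro-points to make explicit in a final write-up: the substitution $a'=M$ needs no feasibility check, since $\cos A\ge(\cos M-\cos b'\cos c')/(\sin b'\sin c')$ is a one-sided bound; and the inflation fact $m_a>a'/2$ follows from the identity unconditionally, because $a'/2<\pi/2$, so either $\cos m_a\ge 0$ and monotonicity applies on $[0,\pi/2]$, or $m_a>\pi/2>a'/2$ trivially.
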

\begin{proof}
Consider Fig.~\ref{figuretansec}. The lemma states that in $\sigma^2(T)$ shown, the \emph{shortest} length edge of $\sigma^2_E(T)$ is $(u_1, u_{113})$, or indeed any such edge on the boundary of triangle $\langle u_1, u_2, u_3 \rangle$. The lemma similarly states that the \emph{longest} edge of $\sigma^2_E(T)$  is edge $(u_{1213}, u_{1223})$, or indeed any edge of the central equilateral triangle $\langle u_{1213}, u_{1223}, u_{1323} \rangle$.

Consider now Fig.~\ref{eqlenFig} illustrating $T = \langle u_1, u_2, u_3 \rangle$. Point $u_{12}$ (resp. $u_{13}$) is at the midpoint of spherical edge $(u_1, u_2)$ (resp. $(u_1, u_3)$).  Let $\alpha = \zeta(u_1, u_2) = \zeta(u_2, u_3) = \zeta(u_1, u_3)$ be the edge length. The intersection of spherical edges $(u_2, u_{13})$ and $(u_1, u_3)$ forms a spherical right angle at $u_{13}$. We denote $y = \zeta(u_{13}, u_{12})$, thus $y$ is the edge length of the central equilateral triangle of $\sigma(T)$ (and $\frac{\alpha}{2} = \zeta(u_1, u_{13})$ is the edge length of the minimal length edge of one of the non-central triangles in $\sigma(T)$; note that this is the same for each such triangle).

\begin{figure}[t]
\begin{minipage}[c]{0.45\linewidth}
\begin{center}
          \begin{tikzpicture}[scale=1.3]
            \pgfmathsetmacro\R{1} 
            \fill[ball color=white!10, opacity=0.1] (0,0,0) circle (\R); 
            \tdplotsetmaincoords{65}{120}
            \begin{scope}[tdplot_main_coords, shift={(0,0)}]
              \coordinate (O) at (0,0,0);


\tdplotdefinepoints(0,0,0)(1, 0, 0)(0,1,0)
\tdplotdrawpolytopearc[thick]{\R}{anchor=north}{\scriptsize $u_{12}$}

\tdplotdefinepoints(0,0,0)(0,1,0)(0, 0, 1)
\tdplotdrawpolytopearc[thick]{\R}{anchor=west}{$\alpha$}

\tdplotdefinepoints(0,0,0)(1, 0, 0)(0,0,1)
\tdplotdrawpolytopearc[thick]{\R}{anchor=east}{\scriptsize $u_{13}$}

\tdplotdefinepoints(0,0,0)(0.7071, 0.7071, 0)(0.7071,0,  0.7071)
\tdplotdrawpolytopearc[blue]{\R}{anchor=east}{}

\tdplotdefinepoints(0,0,0)(1, 0, 0)(0.8165  ,  0.4082,    0.4082)
\tdplotdrawpolytopearc[blue]{\R}{anchor=east}{}

\tdplotdefinepoints(0,0,0)(0, 1, 0)(0.7071,0,  0.7071)
\tdplotdrawpolytopearc[red]{\R}{anchor=east}{}

\tdplotdefinepoints(1,0,0)(0.7571, -0.1, 0.7071)(0.9633 ,   0.0816  ,  0.0816)
\tdplotdrawpolytopearc[]{0.2}{anchor=west}{}

\tdplotdefinepoints(0,0,1)(0.2425    ,     0  ,  0.9701)(0  ,  0.2425  ,  0.9701)
\tdplotdrawpolytopearc[]{0.2}{anchor=west}{}

\tdplotdefinepoints(0,1,0)(0.1715  ,  0.9701  ,  0.1715)(0  ,  0.9701  ,  0.2425)
\tdplotdrawpolytopearc[]{0.2}{anchor=west}{}

\tdplotdefinepoints(0, 0, 0)(0.7028  ,  0.1104  ,  0.7028)(0.6028  ,  0.1104 ,   0.8028)
\tdplotdrawpolytopearc[]{1}{anchor=north}{}
\tdplotdefinepoints(0, 0, 0)(0.6028  ,  0.1104 ,   0.8028)(0.5942      ,   0 ,   0.8043)
\tdplotdrawpolytopearc[]{1}{anchor=north}{}

\tdplotdefinepoints(0, 0, 0)(0.8532 ,   0.3266  ,  0.3266)(0.8594  ,  0.2449,    0.3980)
\tdplotdrawpolytopearc[]{1}{anchor=north}{}
\tdplotdefinepoints(0, 0, 0)(0.7946   , 0.3266  ,  0.4680)(0.8594 ,   0.2449  ,  0.3980)
\tdplotdrawpolytopearc[]{1}{anchor=north}{}

\draw[] (1,-0.1,0) node[anchor=north] {$u_1$};
\draw[] (-0.1,1.1,0) node[anchor=north] {$u_2$};
\draw[] (0,0,1.2) node[anchor=west] {$u_3$};
\draw[] (0,0,0.9) node[anchor=north] {$\gamma$};
\draw[] (0.8165  ,  0.4082   , 0.4082) node[anchor=west] {$y$};

\draw[] (0.3827     ,    0 ,   0.9239) node[anchor=east] {\scriptsize $\frac{\alpha}{2}$};
\draw[] (0.9239      ,   0   , 0.3827) node[anchor=east] {\scriptsize $\frac{\alpha}{2}$};
\draw[] (0,0.8,0.25) node[] {\scriptsize $\frac{\gamma}{2}$};
\draw[] (1,0.15,0.33) node[] {\scriptsize$\frac{\gamma}{2}$};
\draw[] (0.8887 ,  0.2113  ,  0.5074) node[] {\scriptsize$\frac{y}{2}$};

              \coordinate (X) at (3,0,0) ;
              \coordinate (Y) at (0,2,0) ;
              \coordinate (Z) at (0,0,2) ;

                       \draw[-latex] (1.62,0,0) -- (X) node[anchor=west] {$X$};
              \draw[-latex] (0,1.13,0) -- (Y) node[anchor=west] {$Y$};
              \draw[-latex] (0,0,1.1) -- (Z) node[anchor=west] {$Z$};
            \end{scope}
          \end{tikzpicture}
        \end{center}
\caption{\footnotesize Max and min lengths of $\sigma$ tessellations of an equilateral triangle.} \label{eqlenFig}
\end{minipage}
\hspace{1.5cm}
\begin{minipage}[c]{0.45\linewidth}
\begin{center}
          \begin{tikzpicture}[scale=1.4]
            \pgfmathsetmacro\R{1} 
            \fill[ball color=white!10, opacity=0.1] (0,0,0) circle (\R); 
            \tdplotsetmaincoords{65}{120}
            \begin{scope}[tdplot_main_coords, shift={(0,0)}]
              \coordinate (O) at (0,0,0);


\tdplotdefinepoints(0,0,0)(1, 0, 0)(0.7071 ,   0.7071 ,        0)
\tdplotdrawpolytopearc[thick, blue]{\R}{anchor=north}{}
\tdplotdefinepoints(0,0,0)(0.7071  ,  0.7071     ,    0)(0,1,0)
\tdplotdrawpolytopearc[thick]{\R}{anchor=north}{}

\tdplotdefinepoints(0,0,0)(0,1,0)(0, 0, 1)
\tdplotdrawpolytopearc[thick]{\R}{anchor=west}{$u_{23}$}

\tdplotdefinepoints(0,0,0)(1, 0, 0)(0,0,1)
\tdplotdrawpolytopearc[thick]{\R}{anchor=east}{$u_{13}$}

\tdplotdefinepoints(0,0,0)(0,0, 1)(0.5774, 0.5774, 0.5774)
\tdplotdrawpolytopearc[dashed]{\R}{anchor=east}{}

\tdplotdefinepoints(0,0,0)(0,1, 0)(0.5774, 0.5774, 0.5774)
\tdplotdrawpolytopearc[dashed]{\R}{anchor=east}{}

\tdplotdefinepoints(0,0,0)(0.5774, 0.5774, 0.5774)(0.7071, 0.7071, 0)
\tdplotdrawpolytopearc[blue]{\R}{anchor=west}{$\beta$}

\tdplotdefinepoints(0,0,0)(1,0, 0)(0.5774, 0.5774, 0.5774)
\tdplotdrawpolytopearc[blue]{\R}{anchor=south}{$x$}

\tdplotdefinepoints(0,0,0)(0.5774, 0.5774, 0.5774)(0, 0.7071, 0.7071)
\tdplotdrawpolytopearc[blue, dashed]{\R}{anchor=south}{$\beta$}

\tdplotdefinepoints(1,0,0)(0.5774, 0.5774, 0.5774)(0.7071, 0.7071, -0.1)
\tdplotdrawpolytopearc[]{0.2}{anchor=west}{}
\tdplotdefinepoints(0.5774, 0.5774, 0.5774)(1, 0, 0)(0.7071, 0.7071, 0)
\tdplotdrawpolytopearc[]{0.2}{anchor=north}{}

\tdplotdefinepoints(0, 0, 0)(0.6993  ,  0.6993  ,  0.1483)(0.7914   , 0.5934   , 0.1469)
\tdplotdrawpolytopearc[]{1}{anchor=north}{}
\tdplotdefinepoints(0, 0, 0)(0.7914   , 0.5934   , 0.1469)(0.8   , 0.6   , 0)
\tdplotdrawpolytopearc[]{1}{anchor=north}{}

\draw[] (1,-0.1,0) node[anchor=north] {$u_1$};
\draw[] (0.1,1,0) node[anchor=north] {$u_2$};
\draw[] (0,0,1) node[anchor=east] {$u_3$};
\draw[] (0.7071 , 0.7071 , 0) node[anchor=north] {$u_{12}$};
\draw[] (0.5774,0.5774,0.5774) node[anchor=west] {$u_c$}; 
\draw[] (1.4,0.6,0.05) node[] {$\frac{\alpha}{2}$};
\draw[] (0.5774,0.13,-0.045) node[] {$\frac{\gamma}{2}$};
\draw[] (0.93, 0.66, 0.4) node[] {$\frac{2\pi}{6}$};
              \coordinate (X) at (3,0,0) ;
              \coordinate (Y) at (0,2,0) ;
              \coordinate (Z) at (0,0,2) ;

              \draw[-latex] (1.62,0,0) -- (X) node[anchor=west] {$X$};
              \draw[-latex] (0,1.13,0) -- (Y) node[anchor=west] {$Y$};
              \draw[-latex] (0,0,1.1) -- (Z) node[anchor=west] {$Z$};

            \end{scope}
          \end{tikzpicture}
        \end{center}
\caption{\footnotesize Centroid calculations} \label{gaprat}
\end{minipage}
\end{figure}

By the spherical sine rule, $\sin{\left(\frac{\gamma}{2}\right)} = \frac{\sin{\frac{\alpha}{2}}}{\sin{\alpha}}$, which is illustrated by triangle $\langle u_2, u_{13}, u_3 \rangle$. Thus $\sin(\frac{\gamma}{2}) = \frac{1}{2} \sec{\left(\frac{\alpha}{2}\right)}$. Here we used the (standard trigonometric) identity that $\frac{\sin{\left(\frac{x}{2}\right)}}{\sin{x}} = \frac{1}{2}\sec{\left(\frac{x}{2}\right)}$. Further, one can see by the spherical sine rule that  
$\sin{\left(\frac{y}{2}\right)} = \sin{\left(\frac{\alpha}{2}\right)}\cdot\sin{\left(\frac{\gamma}{2}\right)} = \frac{1}{2} \frac{\sin{\left(\frac{\alpha}{2}\right)}}{\cos{\left(\frac{\alpha}{2}\right)}} = \frac{1}{2} \tan{\left(\frac{\alpha}{2}\right)}$.
This implies $\zeta(u_{13}, u_{12}) = y = 2\arcsin{\left(\frac{1}{2}\tan{\frac{\alpha}{2}}\right)}$ which is larger than $\frac{\alpha}{2}$ for $\alpha \in (0, \frac{\pi}{2}]$. To prove this, let $f(\alpha) =  2\arcsin(\frac{1}{2}\tan{\frac{\alpha}{2}})$ then $\frac{\textup{d}f}{\textup{d}\alpha} = \frac{2}{\cos^2{(\frac{\alpha}{2})}\sqrt{4-\tan^2(\frac{\alpha}{2})}}$ as is not difficult to prove. 
Noting that if $\alpha \in (0, \frac{\pi}{2}]$, then $\cos^2{\left(\frac{\alpha}{2}\right)} \in [\frac{1}{2}, 1]$ and $\sqrt{4-\tan^2(\frac{\alpha}{2})} \in [2, \sqrt{3}]$, then  $\frac{\textup{d}f}{\textup{d}\alpha} > \frac{1}{2} = \frac{\textup{d}\frac{\alpha}{2}}{\textup{d}\alpha}$ and thus since $f(\alpha) = 0 = \frac{\alpha}{2}$ when $\alpha = 0$, then $y > \frac{\alpha}{2}$ for $\alpha \in (0, \frac{\pi}{2}]$.

For any depth $k$-tessellation $\sigma^k(T)$, the maximal edge length of $\sigma^k_E(T)$ will thus be given by the length of the edges of the central equilateral triangle and the minimal length edges will be located on the boundary of $T$ as required.
\end{proof}

Given an equilateral spherical triangle $T = \langle u_1, u_2, u_3\rangle$, we now consider the gap ratio implied by the restriction of points to those of $T$. The first part of this lemma shows that the gap ratio of a depth-$k$ tessellation is lower than the gap ratio of a depth-$k+1$ tessellation (when restricted to points of $T$), and the second part shows that in the limit, the upper bound converges.

\begin{lemma}\label{mainlem}
Let $T = \langle u_1, u_2, u_3\rangle$ be an equilateral spherical triangle with spherical edge length $\alpha$, then: 
\begin{enumerate}
\item[i)] $\rho^{T}(\mu(\sigma^k(T))) < \rho^{T}(\mu(\sigma^{k+1}(T)))$;
\item[ii)] $\lim_{k \to \infty} \rho^{T}(\mu(\sigma^k(T))) = \frac{4\sin{(\frac{\alpha}{2})}}{\alpha\sqrt{3-4\sin^2{(\frac{\alpha}{2})}}}$.
\end{enumerate}
\end{lemma}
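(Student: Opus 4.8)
The plan is to collapse the gap ratio at every depth into a single scalar recursion in the edge length of the innermost equilateral triangle, and then analyse that recursion. Write $y_k$ for the edge length of the central equilateral triangle of $\sigma^k(T)$ (well-defined by Lemma~\ref{eqieqilem}), with $y_0=\alpha$. First I would pin down the two ingredients of the gap ratio separately. By Lemma~\ref{minmaxlem} the shortest edge of $\sigma^k_E(T)$ lies on the boundary of $T$, and since triangular dissection halves each boundary arc at every step, $\rho_{\text{min}}(\mu(\sigma^k(T)))=\alpha/2^k$. For the numerator I would argue that the largest empty circle is the one circumscribing the (unique, largest) central equilateral triangle, so that Lemma~\ref{centroidlem} applied with edge length $y_k$ gives $\rho^T_{\text{max}}(\mu(\sigma^k(T)))=2\arcsin\!\big(\tfrac{2}{\sqrt3}\sin(y_k/2)\big)$. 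Combining, $\rho^T(\mu(\sigma^k(T)))=\frac{2^{k+1}}{\alpha}\arcsin\!\big(\tfrac{2}{\sqrt3}\sin(y_k/2)\big)$. The single-step computation inside the proof of Lemma~\ref{minmaxlem} yields the recursion $\sin(y_k/2)=\tfrac12\tan(y_{k-1}/2)$; setting $w_k=1/\sin^2(y_k/2)$ linearises it to $w_k=4w_{k-1}-4$, with closed form $w_k=\tfrac43+4^k\big(\tfrac{1}{\sin^2(\alpha/2)}-\tfrac43\big)$. Since $\alpha\le\pi/2$ forces $\sin^2(\alpha/2)\le\tfrac12<\tfrac34$, the sequence $w_k$ is increasing, $y_k$ decreases strictly to $0$, and $3-4\sin^2(\alpha/2)>0$, which is precisely what makes the claimed limit well defined.

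For part (ii) I would let $k\to\infty$. As $y_k\to 0$ we have $\arcsin\!\big(\tfrac{2}{\sqrt3}\sin(y_k/2)\big)\sim\tfrac{2}{\sqrt3}\sin(y_k/2)$, so it suffices to evaluate $\lim_k 4^k\sin^2(y_k/2)=\lim_k 4^k/w_k=\frac{3\sin^2(\alpha/2)}{3-4\sin^2(\alpha/2)}$ directly from the closed form. Substituting into $\rho^T(\mu(\sigma^k(T)))=\frac{2^{k+1}}{\alpha}\arcsin\!\big(\tfrac{2}{\sqrt3}\sin(y_k/2)\big)$ gives the stated value $\frac{4\sin(\alpha/2)}{\alpha\sqrt{3-4\sin^2(\alpha/2)}}$.

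For part (i) the factor $2^k$ cancels, and $\rho^T(\mu(\sigma^k(T)))<\rho^T(\mu(\sigma^{k+1}(T)))$ is equivalent to $2\arcsin\!\big(\tfrac{2}{\sqrt3}\sin(y_{k+1}/2)\big)>\arcsin\!\big(\tfrac{2}{\sqrt3}\sin(y_k/2)\big)$. Using $\sin(y_{k+1}/2)=\tfrac12\tan(y_k/2)$ and writing $\theta=y_k/2$, this becomes $P(\theta)>Q(\theta)$, where $P(\theta)=2\arcsin(\tan\theta/\sqrt3)$ and $Q(\theta)=\arcsin(2\sin\theta/\sqrt3)$. Since $P(0)=Q(0)=0$, I would establish $P'(\theta)>Q'(\theta)$ on the relevant range; after clearing denominators and squaring, this reduces to the polynomial inequality $(4c-1)(c-1)(c+1)<0$ in $c=\cos^2\theta$, which holds exactly when $\cos^2\theta>\tfrac14$, i.e. $\theta<\pi/3$. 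As $y_k\le\alpha\le\pi/2$ forces $\theta\le\pi/4<\pi/3$, strict monotonicity follows.

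I expect the main obstacle to be the geometric step $\rho^T_{\text{max}}(\mu(\sigma^k(T)))=2\arcsin\!\big(\tfrac{2}{\sqrt3}\sin(y_k/2)\big)$: one must verify that the deepest hole of the configuration sits at the centroid of the central equilateral triangle rather than inside one of the deformed corner triangles, i.e. that the central triangle has the largest circumradius among all triangles of $\sigma^k(T)$ and that its circumscribed disk is empty of other inserted points. This is exactly where the \emph{locality} of the gap ratio together with Lemma~\ref{minmaxlem} must be used carefully; once it is granted, the remaining steps are routine trigonometric manipulation.
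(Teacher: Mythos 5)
Your proposal is correct, and it takes a genuinely different route from the paper's proof. The paper obtains the limit in part (ii) by an asymptotic projection argument: it projects $T$ from the sphere's centre onto the plane tangent at the centroid $u_c$, notes the projected triangle is equilateral with edge $\sqrt{3}\tan x$ where $x=\arcsin\left(\frac{2\sin(\alpha/2)}{\sqrt{3}}\right)$, and argues the central triangle's circumradius approaches $\frac{2\tan x}{2^k}$ as the tessellation flattens; for part (i) it only verifies the base case explicitly (showing $x<2\beta$ via the range of the spherical median, interpolating between $\beta=x$ for the half-sphere and $x=2\beta$ in the Euclidean limit) and then asserts the argument ``applies recursively.'' You instead extract from the computation inside Lemma~\ref{minmaxlem} the exact one-step recursion $\sin(y_k/2)=\frac{1}{2}\tan(y_{k-1}/2)$ for the central edge lengths (legitimate, since by Lemma~\ref{eqieqilem} each central triangle is equilateral and the depth-$(k{+}1)$ central triangle is the dissection of the depth-$k$ one), linearise it via $w_k=1/\sin^{2}(y_k/2)$ to $w_k=4w_{k-1}-4$, and get a closed form valid at every depth. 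I checked your key identities: the substitution indeed gives $w_k=\frac{4}{3}+4^k\left(\frac{1}{\sin^2(\alpha/2)}-\frac{4}{3}\right)$, the limit $4^k/w_k\to\frac{3\sin^2(\alpha/2)}{3-4\sin^2(\alpha/2)}$ reproduces the stated bound exactly, and your derivative comparison for part (i) does reduce, after clearing denominators and squaring, to $(4c-1)(c-1)(c+1)<0$ with $c=\cos^{2}\theta$, which holds for $\theta<\pi/3$ and hence on your range $\theta\leq\pi/4$; your exact formula $\rho^{T}(\mu(\sigma^k(T)))=\frac{2^{k+1}}{\alpha}\arcsin\left(\frac{2}{\sqrt{3}}\sin(y_k/2)\right)$ also reproduces the paper's Table~\ref{resultsTab} values ($1.178$, $1.318$, limit $1.380$) at $\alpha=\arccos(1/\sqrt{5})$. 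What your approach buys: exact gap ratios at every finite depth rather than only in the limit, and a complete uniform proof of the strict monotonicity in part (i), which the paper handles less rigorously. What it costs: nothing essential, though note two caveats. First, both you and the paper rest on the same geometric assertion—that the deepest hole sits at $u_c$ with radius the circumradius of the central equilateral triangle (whose centroid is $u_c$ by the threefold symmetry of the dissection)—and you rightly flag this as the step needing care; the paper's own justification of it is no more rigorous than yours, so this is not a gap relative to the paper. Second, your argument uses $\alpha\leq\pi/2$ (inherited from Lemma~\ref{minmaxlem}), a hypothesis absent from the lemma statement but satisfied in the only application ($\alpha\approx 1.107$) and implicitly needed by the paper as well, since the limit formula requires $3-4\sin^{2}(\alpha/2)>0$.
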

\begin{proof}

Consider Fig.~\ref{gaprat} and let $\alpha = \zeta(u_1, u_2) = \zeta(u_2, u_3) = \zeta(u_1,u_3)$ be the edge length of the equilateral triangle $T = \langle u_1, u_2, u_3\rangle$. Let us calculate $\rho^T(\sigma^0(T)) = \rho^T(T)$. Note by abuse of notation that we write $\rho^T(T)$ rather than the more formal $\rho^T(\mu(T))$, as explained previously.  Recall then that $\rho^T(T)$ denotes the gap ratio of point set $\mu(T)$ when the maximal gap ratio calculation is restricted to points of $T$. 

We see that $\rho_{\text{min}}(T) = \alpha$ since all edge lengths of $T$ are identical. Clearly $\rho^T_{\text{max}}(T) = 2x$; in other words the maximal spherical diameter of the largest empty circle centered inside $T$ should be placed at the centroid $u_c$ of $T$. This follows since if the circle is centered at any other point of $T$, then it will be closer to at least one vertex of $T$ and therefore the maximal ratio would only decrease. Thus $\rho^T(\sigma^0(T)) = \frac{2x}{\alpha}$.

By Lemma~\ref{eqieqilem}, triangle $\langle u_{12}, u_{23}, u_{13} \rangle$ in the decomposition $\sigma(T)$ is also equilateral. It is clear that $\beta > \alpha/2$ in Fig.~\ref{gaprat} by the spherical sine rule, since $\gamma > \pi/3$ (by Girard's theorem). Therefore, $\rho_{\text{min}}(\sigma^1(T)) = \frac{\alpha}{2}$, $\rho^T_{\text{max}}(\sigma^1(T)) = 2\beta$ and thus $\rho^T(\sigma^1(T)) = \frac{4\beta}{\alpha}$. We now show that $\frac{2x}{\alpha} < \frac{4\beta}{\alpha}$, which is true if $x < 2\beta$. 

Let us consider the projection of equilateral  spherical triangle $\langle u_1, u_2, u_3 \rangle$ from the center of the unit sphere to a tangent plane at the point $u_c$. The point $u_c$ is the centroid of spherical triangle $\langle u_1, u_2, u_3 \rangle$, as well its  projection to the plane $P$, given by the planar triangle $\langle u'_1, u'_2, u'_3 \rangle$.

The median of spherical triangle $\langle u_1, u_2, u_3 \rangle$ has length $x+\beta$. The range of $x$ is from $\beta$ to $2\beta$. This follows from the fact that in the maximal equilateral spherical triangle case
(i.e. when each angle is $\pi$ and the triangle forms a half sphere) $\beta = x = \frac{\pi}{2}$ and when the area of the 
spherical triangle converges to zero, the median of the spherical triangle
$\langle u_1, u_2, u_3\rangle$ converges to the median of the triangle
projection $\langle u'_1, u'_2, u'_3\rangle$, and $x$ converges to $2 \beta$
as the centroid of a Euclidean triangle divides each median in the ratio $2:1$.

We thus see that the gap ratio of the (six) points of $\sigma^1(T)$ is greater than the gap ratio of the (three) points of $\sigma^0(T)$, when restricted to points of $T$. Since the maximal ratio is calculated by using a circle centered at the centroid of the triangle $T$, this argument applies recursively and for each tessellation $\sigma^k(T)$, the maximal ratio is given by twice the distance of the centroid to the vertices of the central equilateral triangle of the tesselation by Lemma~\ref{minmaxlem}, and therefore the gap ratio increases at each depth of the tessellation, which proves statement one of the lemma.  We will now determine $\lim_{k \to \infty} \rho^T(\sigma^k(T))$ to prove the second statement.

We may observe that $\rho_{\text{min}}(\sigma^k(T)) = \frac{\alpha}{2^k}$, since the outer edges of triangle $T$ (with length $\alpha$) are subdivided into two $k$ times under $\sigma^k(T)$ and all interior edges have greater length. As explained above, the maximal diameter circle which may be placed on a point of $T$ which does not intersect points of $\sigma^k(T)$ will be centered at the centroid $u_c$ of $T$ and have a diameter twice the distance from $u_c$ to a vertex of that triangle. 

Construct a plane $P_{u_c}$ tangent to the point $u_c$ (the centroid of $T$). In Fig.~\ref{gaprat}, note that $x = \zeta(u_1, u_c) = \arcsin{\left(\frac{2\sin(\frac{\alpha}{2})}{\sqrt{3}}\right)}$ by Lemma~\ref{centroidlem}. The distance\footnote{Note that here we refer to the Euclidean distance between the points, rather than the spherical distance, since the projected points are not on the sphere} from the centroid point to a vertex projected by points $u_1, u_2$ or $u_3$ is given by $\tan{x}$ (see Fig.~\ref{figuretansec} and Section~\ref{spher-trig}), thus the edges of the projection of triangle $T$ have length $y = \sqrt{3}\tan{x}$ since the projection of an equilateral triangle about its centroid from the origin to $P_{u_c}$ is equilateral, with the same centroid (and the distance from the centroid of a Euclidean triangle to any vertex is of course given by $\frac{e}{\sqrt{3}}$, where $e$ is the edge length of the triangle). The central tessellated triangle thus has edges whose length starts to approximate $\frac{\sqrt{3}\tan{x}}{2^k}$ in the limit, since as $k \to \infty$, then this triangle lies on the plane tangent at $u_c$ (i.e. the difference between the edge length of the spherical triangle and its projection decreases to zero). This implies that the  maximal spherical diameter of the largest empty circle centered at $u_c$ is the distance from $u_c$ to one of these vertices, which approaches $\frac{2\tan{x}}{2^k}$ in the limit as $k \to \infty$. Therefore, 
\begin{eqnarray}
\lim_{k \to \infty} \rho^{T}(\sigma^k(T)) & = & \frac{\rho^{T}_{\text{max}}(\sigma^k(T))}{\rho_{\text{min}}(\sigma^k(T))} = \frac{2\tan{x}/2^k}{ \alpha/2^k}
=  \frac{2\tan\left({\text{arcsin}\left(\frac{2\sin(\frac{\alpha}{2})}{\sqrt{3}}\right)}\right)}{\alpha} \label{eqq1} \\ 
&= & \frac{4\sin{(\frac{\alpha}{2})}}{\alpha\sqrt{3-4\sin^2{(\frac{\alpha}{2})}}} \label{eqq2}
\end{eqnarray}
Moving from (\ref{eqq1}) to (\ref{eqq2}), we used the identity $\tan{(\arcsin{x})} = \frac{x}{\sqrt{1-x^2}}$.
\end{proof}

\section{Regular icosahedral tessellation}\label{regular}

As explained in Section~\ref{overviewsec} and Algorithm~\ref{thealg}, our algorithm consists of two stages. Using the lemmata of the previous section, we are now ready to show that the stage one gap ratio is no more than $\frac{\pi}{\arccos{\left( \frac{1}{\sqrt{5}}\right)}} \approx 2.8376$  and the second stage gap ratio is no more than $
\frac{2(3-\sqrt{5})}{\arcsin{\left(\frac{1}{2}\sqrt{2-\frac{2}{\sqrt{5}}}\right)}} \approx 2.760
$.

\begin{lemma}\label{phaseone}
The gap ratio of stage one is no more than $\frac{\pi}{\arccos{\left( \frac{1}{\sqrt{5}}\right)}} \approx 2.8376$.
\end{lemma}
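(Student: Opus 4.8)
The plan is to bound the two quantities making up the gap ratio $\rho^{\Sph}(S_i)=\rho^{\Sph}_{\text{max}}(S_i)/\rho_{\text{min}}(S_i)$ separately and \emph{uniformly over every step} $i$, since the online setting requires the bound to hold each time a point is placed, not merely for the final twelve-vertex configuration. The target value $\pi/\arccos(1/\sqrt5)$ factors naturally as a ratio whose denominator $\arccos(1/\sqrt5)$ I expect to be a lower bound for $\rho_{\text{min}}$ and whose numerator $\pi$ I expect to be an upper bound for $\rho^{\Sph}_{\text{max}}$; establishing these two one-sided bounds and dividing then yields the claim.

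First I would pin down $\rho_{\text{min}}$. Using the standard coordinates for the regular icosahedron (the cyclic permutations of $(0,\pm1,\pm\phi)$ with $\phi=\frac{1+\sqrt5}{2}$), a direct computation of the cosine of the central angle between two adjacent vertices gives $1/\sqrt5$, so the spherical edge length is $\arccos(1/\sqrt5)$. The key combinatorial observation is that this edge length is the \emph{smallest} angular separation occurring among the twelve vertices (the other separations being the larger second-neighbour distance and the antipodal distance $\pi$). Consequently $\rho_{\text{min}}(S_i)\ge\arccos(1/\sqrt5)$ for every partial configuration $S_i$, with equality as soon as two adjacent vertices have both been placed; and since inserting points can only decrease the minimal pairwise distance, this inequality persists for all $i$.

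Next I would bound $\rho^{\Sph}_{\text{max}}$ using the requirement that the first two inserted points be antipodal. For an antipodal pair $a,-a$ and any point $p\in\Sph$ one has $\zeta(p,a)+\zeta(p,-a)=\pi$, so $\min(\zeta(p,a),\zeta(p,-a))\le\pi/2$. Hence already after the second point, $\min_{q\in S_i}\zeta(p,q)\le\pi/2$ for every $p$, giving $\rho^{\Sph}_{\text{max}}(S_i)=\max_{p}\min_{q}2\zeta(p,q)\le\pi$; and since adding further points only shrinks the largest empty circle, $\rho^{\Sph}_{\text{max}}(S_i)\le\pi$ holds for all $i\ge2$. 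Combining the two bounds gives $\rho^{\Sph}(S_i)\le\pi/\arccos(1/\sqrt5)$ for every $i\ge2$, as required. (For $i=1$ the gap ratio is undefined, and the two starting vertices can be taken to be an antipodal pair of the icosahedron, which exists by its central symmetry.)

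The main obstacle is not any single calculation but the uniform-over-time argument: one must be careful that the $\pi$ bound on $\rho^{\Sph}_{\text{max}}$ genuinely holds at \emph{every} intermediate step rather than only at the end, which is precisely why placing the antipodal pair first is essential --- a single point alone permits an empty circle of diameter $2\pi$. The secondary subtlety is the combinatorial claim that the icosahedral edge is the globally minimal vertex separation; I would verify this by enumerating the three possible inter-vertex angles from the coordinate description and checking that $\arccos(1/\sqrt5)$ is the least.
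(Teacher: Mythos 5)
Your proof is correct and takes essentially the same approach as the paper: the antipodal starting pair caps the largest empty-circle diameter at $\pi$ for every subsequent insertion, the minimal icosahedral vertex separation $\arccos\left(\frac{1}{\sqrt{5}}\right)$ bounds $\rho_{\text{min}}$ from below, and dividing gives the claim. The only superficial difference is that you compute the edge length directly from the coordinates by a dot product, whereas the paper derives $\cos\alpha = \frac{1}{\sqrt{5}}$ via Girard's theorem and the spherical law of cosines; your write-up also makes explicit the step-by-step uniformity that the paper leaves implicit.
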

\begin{proof}
The points of a regular icosahedron can be defined by taking circular permutations of $(0, \pm 1, \pm \phi)$, where $\phi = \frac{1 + \sqrt{5}}{2}$ is the golden ratio. Let $V'$ be the set of the twelve such vertices. Normalising each element of $V'$ gives a set $V$. Note that the area of each spherical triangle is given by $\frac{\pi}{5}$ since we have a unit sphere and twenty identical spherical triangles forming a tessellation. By Girard's theorem (Theorem~\ref{girard}), this implies that $3\gamma - \pi = \frac{\pi}{5}$, where $\gamma$ is the interior angle of the equilateral triangle and thus $\gamma = \frac{2\pi}{5}$. By the second spherical law of cosines (Lemma~\ref{sphcosine}), this implies that the spherical distance between adjacent vertices, $\alpha$, is thus given by $\cos(\alpha)  =  \frac{\cos(\frac{2\pi}{5}) + \cos^2(\frac{2\pi}{5})}{\sin^2(\frac{2\pi}{5})}  =  \frac{\frac{1}{4}(\sqrt{5}-1) + \frac{3}{8}+ (\frac{\sqrt{5}}{8})}{\frac{5}{8} + \frac{\sqrt{5}}{8}} = \frac{\frac{1}{8}(1 + \sqrt{5})}{\frac{1}{8}(5 + \sqrt{5})} = \frac{1}{\sqrt{5}}$ 
and therefore $\alpha \approx 1.1071$. The first two points are placed opposite to other, for example $u_1 \approx (0, -0.5257, 0.8507)$ and $u_2 \approx (0, 0.5257, -0.8507)$. At this stage, the gap ratio is $1$, since the largest circle may be placed on the equator (with $u_1$ and $u_2$ at the poles) with a diameter of $\pi$, whereas the spherical distance between $u_1$ and $u_2$ is $\pi$. The remaining ten vertices of the normalised regular icosahedron are placed in any order. The minimal distance between them is given by $\alpha$ above, and thus the gap ratio during stage one is no more than 
$\frac{\pi}{\arccos{\left( \frac{1}{\sqrt{5}}\right)}} \approx 2.8376,$
as required.
\end{proof}

As explained in Section~\ref{overviewsec} and Algorithm~\ref{thealg}, we start with the twenty equilateral spherical triangles produced in stage one, denoted the depth-$0$ tessellation of $\Sph$. We apply $\sigma$ to each such triangle to generate $20*4 = 80$ smaller triangles (note that not all such triangles are equilateral, in fact only eight triangles at each depth tessellation are equilateral). At this stage we have the depth-$1$ tessellation of $\Sph$. We recursively apply $\sigma$ to each spherical triangle at depth-$k$ to generate the depth-$k+1$ tessellation, which contains $20*4^{k+1}$ spherical triangles.

\begin{lemma}\label{phasetwo}
The gap ratio of stage two is no more than $ \frac{12-4\sqrt{5}}{\arccos{\left(\frac{1}{\sqrt{5}}\right)}}\approx 2.760$.
\end{lemma}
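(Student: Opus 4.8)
The plan is to bound the gap ratio over every configuration that the online process of Algorithm~\ref{thealg} actually passes through in stage two, and to show that this supremum is exactly twice the per-triangle limit of Lemma~\ref{mainlem}. By Lemma~\ref{phaseone} each of the $20$ base triangles is equilateral with edge length $\alpha=\arccos(1/\sqrt5)$, so by the locality established in Lemma~\ref{mainlem} it suffices to track a single base triangle $T$. The decisive observation, which is exactly what produces the factor of two over the limit $L:=\lim_{k\to\infty}\rho^{T}(\mu(\sigma^{k}(T)))$, is that the algorithm must pass through intermediate (mixed-depth) configurations in which the minimal distance has already been halved -- because some triangle has been subdivided from depth $k$ to depth $k+1$ -- while the largest empty circle has not yet shrunk -- because some undivided central triangle of depth $k$ still remains.

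Concretely, for the pass that refines the depth-$k$ tessellation into the depth-$(k+1)$ one I would bound the numerator and denominator of $\rho$ separately, so that the argument is independent of the order in which triangles are processed. For the denominator: by Lemma~\ref{minmaxlem} the shortest edge present at depth $k$ is a boundary edge of length $\alpha/2^{k}$, and a single application of $\sigma$ only bisects existing edges, so no edge shorter than $\alpha/2^{k+1}$ is ever created and $\rho_{\text{min}}\ge \alpha/2^{k+1}$ throughout the pass. For the numerator: inserting points can only shrink empty circles, so at every instant the largest empty circle is no larger than at the completed depth-$k$ tessellation, whose diameter is $D_{k}:=\rho^{T}_{\text{max}}(\mu(\sigma^{k}(T)))$ (realised at the centroid $u_{c}$ of a central equilateral triangle by Lemma~\ref{mainlem} and Lemma~\ref{centroidlem}, and equal over the whole sphere by locality). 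Combining the two bounds gives, at every configuration reached during this pass,
$$\rho \;\le\; \frac{D_{k}}{\alpha/2^{k+1}} \;=\; 2\cdot\frac{D_{k}}{\alpha/2^{k}} \;=\; 2\,\rho^{T}(\mu(\sigma^{k}(T))) \;<\; 2L,$$
where the strict inequality is the monotonicity and convergence of Lemma~\ref{mainlem}(i)--(ii), and the supremum $2L$ is approached as $k\to\infty$.

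It then remains only to evaluate $2L$ at $\alpha=\arccos(1/\sqrt5)$. Substituting $\sin^{2}(\alpha/2)=(1-\cos\alpha)/2=\tfrac12(1-1/\sqrt5)$ into the closed form of Lemma~\ref{mainlem}(ii), so that $3-4\sin^{2}(\alpha/2)=1+2/\sqrt5$, I would compute $(2L\alpha)^{2}=224-96\sqrt5=(12-4\sqrt5)^{2}$; hence $2L\alpha=12-4\sqrt5$, giving the stated bound $2L=\frac{12-4\sqrt5}{\arccos(1/\sqrt5)}\approx 2.760$.

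The step I expect to be the main obstacle is justifying the locality rigorously, namely that during a mixed-depth configuration the largest empty circle on the whole sphere is genuinely realised inside a single undivided base-triangle region and coincides with the within-triangle centroidal circle of Lemma~\ref{mainlem}, rather than straddling a shared boundary edge where two independently tessellated neighbours meet. The resolution is that adjacent base triangles bisect their common edge identically and symmetrically, so the two triangulations agree along every boundary and no larger straddling gap can form; turning this into a clean argument -- together with confirming that $\rho_{\text{min}}$ is indeed controlled by the boundary edges at every instant and that $2L$ is attained only in the limit, so the bound is tight -- is the delicate part, whereas the displayed inequality and the closed-form algebra are routine.
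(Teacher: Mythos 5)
Your proposal is correct and follows essentially the same route as the paper's proof: reduce to a single base triangle via locality of midpoint bisection along shared edges, bound intermediate mixed-depth configurations by noting the minimal gap drops by at most a factor of $2$ during a pass while the maximal empty circle cannot grow, and then evaluate twice the limit of Lemma~\ref{mainlem}(ii) at $\alpha=\arccos(1/\sqrt{5})$. Your separate numerator/denominator bounds and the squaring trick $(2L\alpha)^2=224-96\sqrt{5}=(12-4\sqrt{5})^2$ are just a slightly more explicit rendering of the paper's argument and algebra.
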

\begin{proof}
At the start of stage two, we have $20$ equilateral spherical triangles, $T_1, \ldots, T_{20}$ which are identical (up to rotation). Note that moving from depth-$k$ tessellation to depth-$k+1$, each edge of $\sigma_E^k(T_i)$ will be split at its midpoint, therefore applying $\sigma$ to any spherical triangle will only `locally' change the gap ratio of at most two adjacent triangles. Thus the order in which $\sigma$ is applied to each triangle at depth $k$ is irrelevant.

Assume that we have a (complete) depth-$k$ tessellation with $20*4^{k}$ triangles.  Lemma~\ref{mainlem} tell us that the gap ratio increases from the depth-$k$ to the depth-$k+1$ tessellations for all $k \geq 0$ and in the limit, the gap ratio of the depth-$k$ tessellation of each $T_i$ is given by:
\begin{equation}
\lim_{k \to \infty} \rho^T( \sigma^k(T_i))  =  \frac{4\sin{(\frac{\alpha}{2})}}{\alpha\sqrt{3-4\sin^2{(\frac{\alpha}{2})}}} \label{stage2bound}
\end{equation}
where $\alpha$ is the length of the edges of $T_i$ (i.e. the length of those triangles produced by stage $1$ via the icosahedron). When we start to tessellate the depth-$k$ spherical triangles, until we have a complete depth-$k+1$ tessellation, applying $\sigma$ to each of the triangles may decrease the minimal gap ratio at most by a factor up to $2$ overall (since we split each edge at its midpoint). The maximal gap ratio cannot increase, but decreases upon completing the depth $k+1$ tessellation. Therefore, we multiply Eq.~(\ref{stage2bound}) by $2$ to obtain an upper bound of the gap ratio for the entire sequence, not only when some depth-$k$ tessellation is complete. We now solve Eq.~(\ref{stage2bound}), after multiplying by $2$, by substituting $\alpha =  \arccos{\left( \frac{1}{\sqrt{5}}\right)} \approx 1.1071$. This is laborious, but by noting that $\sin(\frac{\alpha}{2}) = \sqrt{\frac{1}{10}(5-\sqrt{5})}$ and $\sin^2(\frac{\alpha}{2}) =\frac{1}{10}(5-\sqrt{5})$, then:
$
\frac{8\sin{(\frac{\alpha}{2})}}{\alpha\sqrt{3-4\sin^2{(\frac{\alpha}{2})}}} =  \frac{8\sqrt{\frac{1}{10}(5-\sqrt{5})}}{  \arccos{\left(\frac{1}{\sqrt{5}}\right)}\sqrt{1+\frac{2}{\sqrt{5}}}  } =
 \frac{12-4\sqrt{5}}{\arccos{\left(\frac{1}{\sqrt{5}}\right)}}\approx 2.760
$. 
Therefore, the gap ratio during stage two is upper bounded by $2.76$.
\end{proof}

\begin{theorem}
The gap ratio of the icosahedral triangular dissection is equal to $\frac{\pi}{\arccos{\left( \frac{1}{\sqrt{5}}\right)}} \approx 2.8376$.
\end{theorem}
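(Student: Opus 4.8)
The plan is to prove the stated equality by establishing matching upper and lower bounds on $\sup_i \rho(S_i)$, the worst-case gap ratio over the entire (infinite) point sequence produced by Algorithm~\ref{thealg}. Writing $\alpha = \arccos(1/\sqrt{5})$ for the icosahedral edge length (as computed in Lemma~\ref{phaseone}), the target value is $\pi/\alpha \approx 2.8376$.

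For the upper bound I would simply combine the two preceding lemmas. Every configuration $S_i$ arising during stage one satisfies $\rho(S_i) \le \frac{\pi}{\arccos(1/\sqrt{5})} \approx 2.8376$ by Lemma~\ref{phaseone}, while every configuration arising during stage two satisfies $\rho(S_i) \le \frac{12-4\sqrt{5}}{\arccos(1/\sqrt{5})} \approx 2.760$ by Lemma~\ref{phasetwo}. Since $2.760 < 2.8376$, the stage-one bound dominates, and therefore $\rho(S_i) \le \frac{\pi}{\arccos(1/\sqrt{5})}$ for every $i$, whence $\sup_i \rho(S_i) \le \pi/\alpha$.

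For the lower bound (tightness) I would exhibit a single configuration attaining this ratio, namely the set $S_3$ of the first three inserted points. The first two points $u_1, u_2$ are antipodal, and whichever icosahedral vertex is chosen as $u_3$ it is adjacent (distance $\alpha$) to exactly one of $u_1, u_2$ and at distance $\pi - \alpha$ from the other; since $\alpha < \pi - \alpha < \pi$, this gives $\rho_{\text{min}}(S_3) = \alpha$. It then remains to show $\rho^{\Sph}_{\text{max}}(S_3) = \pi$. Antipodality of $u_1, u_2$ gives $\zeta(p,u_1) + \zeta(p,u_2) = \pi$ for every $p \in \Sph$, so $\min(\zeta(p,u_1), \zeta(p,u_2)) \le \pi/2$ and hence $\rho^{\Sph}_{\text{max}}(S_3) \le \pi$. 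For the reverse inequality I would produce an explicit witness: take $p$ on the great circle equidistant (distance $\pi/2$) from $u_1$ and $u_2$, chosen at the longitude diametrically opposite to $u_3$. Combining the two bounds yields $\rho(S_3) = \pi/\alpha$, matching the upper bound and proving equality.

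The only real obstacle is the exact evaluation $\rho^{\Sph}_{\text{max}}(S_3) = \pi$: the ``$\le \pi$'' direction is immediate from antipodality, but the ``$\ge \pi$'' direction requires verifying that a largest empty circle of diameter $\pi$ survives the insertion of the third point. This reduces to a short spherical-coordinate check that the chosen equatorial point stays at spherical distance at least $\pi/2$ from $u_3$: since $u_3$ sits at colatitude $\alpha$ from the pole $u_1$, the opposite-longitude equatorial point $p$ satisfies $\cos\zeta(p,u_3) = -\sin\alpha \le 0$, so $\zeta(p,u_3) \ge \pi/2$ and thus $\min_{q \in S_3}\zeta(p,q) = \pi/2$. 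I would also remark that, because this argument does not depend on which vertex plays the role of $u_3$, the value $\pi/\alpha$ is attained for every admissible insertion order, so the equality is robust and not an artefact of a particular ordering.
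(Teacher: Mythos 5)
Your proposal is correct, and its upper-bound half is exactly the paper's proof: the paper disposes of the theorem in two sentences, observing that it is a corollary of Lemma~\ref{phaseone} and Lemma~\ref{phasetwo}, with the stage-one bound $\frac{\pi}{\arccos\left(1/\sqrt{5}\right)} \approx 2.8376$ dominating the stage-two bound $\approx 2.760$. Where you genuinely diverge is the tightness half. The theorem asserts \emph{equality}, but the paper's proof cites only two upper-bound lemmas and leaves attainment implicit in the stage-one analysis; you make it explicit and rigorous by exhibiting $S_3$ as a witness. Your supporting computations are sound: each of the ten non-polar icosahedral vertices is at distance $\alpha$ from exactly one of the antipodal pair $u_1, u_2$ and at distance $\pi - \alpha$ from the other (the two neighbourhoods of five vertices partition the remaining ten), and since $\alpha < \pi - \alpha$ this gives $\rho_{\text{min}}(S_3) = \alpha$ for \emph{any} admissible third insertion; antipodality gives $\zeta(p,u_1) + \zeta(p,u_2) = \pi$, hence $\rho_{\text{max}}(S_3) \leq \pi$; and your equatorial point at longitude opposite $u_3$ satisfies $\cos \zeta(p, u_3) = -\sin\alpha < 0$, so $\min_{q \in S_3} \zeta(p,q) = \pi/2$ exactly and $\rho_{\text{max}}(S_3) = \pi$. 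Together these yield $\rho(S_3) = \pi/\alpha$, matching the upper bound. What your route buys is a complete proof of the stated equality (the supremum is attained, at the third point, independently of insertion order), where the paper's proof as written only establishes the upper bound and tacitly relies on the reader extracting the attainment from the proof of Lemma~\ref{phaseone}; the cost is a short spherical-coordinate verification that the paper elides. No gaps.
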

\begin{proof}
This is a corollary of Lemma~\ref{phaseone} and Lemma~\ref{phasetwo}. Stage one of the algorithm has a larger gap ratio than stage two in this case.
\end{proof}

We can now prove the first nontrivial lower bound when we have only 2 or 3 points on the sphere in this online version of the problem.

\begin{theorem}\label{lowerboundthm1}
The gap ratio for the online problem of placing the first three points on the sphere cannot be less than $\frac{1+\sqrt{5}}{2} \approx 1.6180$.
\end{theorem}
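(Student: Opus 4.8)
The plan is to reason about an arbitrary online strategy and show that among the first three insertions some configuration is forced to have gap ratio at least $\phi=\frac{1+\sqrt5}{2}$. Since $\rho(S_1)$ is undefined (only one point), it suffices to prove $\max\{\rho(S_2),\rho(S_3)\}\ge\phi$. I would isolate the one quantity the adversary cannot avoid committing to, namely $d=\zeta(p_1,p_2)$, the distance realised at step two, and record the golden-ratio identities $\phi^2=\phi+1$ and $2-\phi=1/\phi^2$, so that the value $2\pi/\phi^2=2\pi(2-\phi)$ (the ``golden angle'') will turn out to be the threshold value of $d$ at which the two cases meet.

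Step two is exact. The largest empty circle for two points is centred at the antipode $m'$ of the midpoint $m$ of the arc $p_1p_2$: on the perpendicular-bisector great circle the distance to $p_1$ (equivalently $p_2$) is maximised at $m'$ and equals $\pi-d/2$, while any point off the bisector is strictly closer to one of $p_1,p_2$. Hence $\rho^{\Sph}_{\text{max}}(S_2)=2\pi-d$ and $\rho(S_2)=(2\pi-d)/d=2\pi/d-1$, which is $\ge\phi$ exactly when $d\le 2\pi/\phi^2$. This disposes of the case of small $d$.

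For the remaining case $d>2\pi/\phi^2$ I would establish two bounds valid for \emph{every} choice of $p_3$. First, the step-two computation already shows that the farthest point from $\{p_1,p_2\}$ sits at distance $\pi-d/2$, so $\min\{\zeta(p_1,p_3),\zeta(p_2,p_3)\}\le\pi-d/2$ for any $p_3$, giving $\rho_{\text{min}}(S_3)\le\pi-d/2$. Second --- and this is the crux --- I claim $\rho^{\Sph}_{\text{max}}(S_3)\ge\pi$ unconditionally: the three points lie on a common circle (their spherical circumcircle), which splits $\Sph$ into two empty caps of angular radii $r$ and $\pi-r$; the pole of each cap is equidistant from $p_1,p_2,p_3$, hence centres an empty circle of diameter $2r$ or $2(\pi-r)$, and since $\max\{r,\pi-r\}\ge\pi/2$ we get $\rho^{\Sph}_{\text{max}}(S_3)\ge\pi$. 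Combining, $\rho(S_3)\ge\pi/(\pi-d/2)$, which is $\ge\phi$ precisely when $d\ge 2\pi(2-\phi)=2\pi/\phi^2$.

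Finally I would observe that the two thresholds coincide at $d=2\pi/\phi^2$, where $\rho(S_2)=\rho(S_3)=\phi$, so the two cases together cover all $d\in(0,\pi]$ and yield $\max\{\rho(S_2),\rho(S_3)\}\ge\phi$ in every case. The only genuinely delicate point is controlling $\rho^{\Sph}_{\text{max}}(S_3)$ for an arbitrary third point: a direct case analysis over the possible positions of $p_3$ is unpleasant, so the main idea is to bypass it entirely with the circumcircle-cap bound $\rho^{\Sph}_{\text{max}}(S_3)\ge\pi$, which holds for any three distinct points without any optimisation over $p_3$. Care is needed only to confirm that the cap interiors are genuinely empty and that the cap poles realise the minimum distance to $S_3$, both of which are immediate since all three points lie on the common boundary circle.
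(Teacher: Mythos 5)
Your proof is correct, and at its core it runs on the same one-parameter tradeoff as the paper's: both reduce everything to the committed distance $d=\zeta(p_1,p_2)$ (the paper writes $d=\pi-x$ with $p_2$ shifted $x$ from the antipode), compute $\rho(S_2)=\frac{2\pi-d}{d}$ exactly, bound $\rho(S_3)\ge\frac{\pi}{\pi-d/2}$, and locate the optimum where the increasing and the decreasing function of $d$ meet, yielding $\frac{1+\sqrt5}{2}$. The difference is in how the three-point bound is justified. The paper argues via an optimal placement: it puts $p_3$ on the plane through the centre and $p_1,p_2$, equidistant from both, computes the ratio $\frac{2\pi}{\pi+x}$ there, and then asserts that for any off-plane $p_3$ the maximal gap $a$ exceeds $\pi$ while the minimal gap $b$ drops below $\frac{\pi+x}{2}$. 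You bypass the optimisation over $p_3$ entirely with two unconditional bounds: $\rho_{\text{min}}(S_3)\le\pi-d/2$, which is just the two-point computation reused, and $\rho^{\Sph}_{\text{max}}(S_3)\ge\pi$ for \emph{any} three points via the circumcircle/cap argument (noting that three distinct points on $\Sph$ are never collinear, so the circumcircle exists). That cap lemma is the genuinely nice ingredient: it is a clean, fully rigorous statement that subsumes both of the paper's cases at once --- the coplanar case, where the circumcircle is a great circle ($r=\pi/2$) and the empty diameter is exactly $\pi$, and the off-plane case, where one cap has radius strictly greater than $\pi/2$, which is precisely the paper's informally asserted ``$a$ would be greater than $\pi$''. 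Your case split at $d=2\pi/\phi^2$ is algebraically equivalent to the paper's solving $\frac{2\pi}{\pi+x}=\frac{\pi+x}{\pi-x}$. What each approach buys: the paper's constructive version exhibits the extremal three-point configuration explicitly (which it reuses in Section~\ref{counterSec} to refute the claimed $1.78$ lower bound), whereas your version is shorter, avoids any appeal to unproved optimality of the on-plane placement, and shows the bound holds against every strategy without exhibiting the optimiser.
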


\begin{proof}
Let us first estimate the ratio with only two points when one point is located at the north pole of the $2$-sphere and the other is shifted by a distance $x$ 
from the south pole. The gap ratio in this case is $\frac{\pi+x}{\pi-x}$, which increases from $1$ to $\infty$ when $x \geq 0$.

Let us now consider the case with three points. 
If we place the third point on the plane $P$ defined by the center of the sphere and the other two points,
then the gap ratio will be $\frac{\pi}{\frac{\pi+x}{2}}$=$\frac{2\pi}{\pi+x}$ as in this case the maximal diameter of an empty circle is $\pi$ regardless of the position of the third point on $P$.
Note that the center of the largest circle not intersecting points $p_1, p_2, p_3$ will be orthogonal to these three points (giving the diameter of the largest empty circle as at least $\pi$) and the minimal distance between two points can be maximised by positioning the third point at the largest distance from the initial two points so that the closest two points have a distance of $\frac{\pi+x}{2}$.

If the third point is not on the plane $P$ then the ratio will be equal to some value 
$\frac{a}{b}$ that is  larger than $\frac{2 \pi}{\pi+x}$.
This follows from the fact that the value $a$ which is the maximal gap would be greater than $\pi$ and the minimal gap $b$ would be 
less than $\frac{\pi+x}{2}$. So the minimal gap ratio that can be achieved for the three points will be represented by the expression $\frac{2 \pi}{\pi+x}$. See Fig.~\ref{threePointEg}.

\begin{figure}
\begin{center}          
\begin{tikzpicture}[scale=1]

\draw[gray,->] (-2.5,0) -- (2.5,0);

\draw [thin, gray] (0,0) circle (2cm);
\draw [thick, blue, fill=ggcol] (0,0) ++(90:2) arc (90:-47:2);
\draw [ggcol, fill=ggcol] (0,0) -- (0, 2) -- (1.364, -1.463) -- cycle;
\draw [thick, red] (0,0) ++(-90:0.4) arc (-90:-47:0.4);
\draw [thick, red] (0,0) ++(-90:2) arc (-90:-47:2);
\draw [thick, blue] (0,0) ++(90:0.4) arc (90:26.5:0.4);
\node at (0.6,0.8) {{\small $\frac{\zeta(p_1, p_2)}{2}$}};
\draw (0,0) -- (1.364, -1.463);

\node[] at (0.3,-0.65) {$x$};
\node[] at (1,-2.4) {$x$};
\node[] at (-0.3, 2.4) {$p_1$};
\node[] at (1.8, -1.5) {$p_2$};
\node[] at (-2.15, -0.833) {$p_3$};
\node[] at (1.578, -2.2) {};
\draw[gray,->] (0,-2.5) -- (0,2.5);
\draw[gray,dashed] (0,0) -- (1.86,0.733);
\draw (0,0) -- (-1.86, -0.733);
\draw[gray,dashed] (0,0) -- (2,0);
\draw [<->] (0,0) ++(-90:2.3) arc (-90:-47:2.3);
\end{tikzpicture}
\caption{Optimal placement of first three points to minimize gap ratio.}\label{threePointEg}
\end{center}
\end{figure}

By solving the equation where the left hand side represents the gap ratio in the case of 3 points (a decreasing function) and the right hand side representing the case with 2 points (an increasing function), we find a positive value of the one unknown $x$:
$\frac{2 \pi}{\pi+x}=\frac{\pi+x}{\pi-x}$. The only positive value $x$ satisfying the above equation has the value  $\pi(\sqrt{5}-2)$ and the gap ratio for this value $x$
is equal to $\frac{1+\sqrt{5}}{2}$.
\end{proof}

We now show that by considering the first four points, we can do slightly better in lower bounding the minimal gap ratio.

\begin{theorem}\label{lowerboundthm2}
The gap ratio for the online problem of placing four or more points on the sphere is greater than $1.726$.
\end{theorem}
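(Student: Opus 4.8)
The plan is to extend the crossover argument of Theorem~\ref{lowerboundthm1} from three to four points. As there, I would fix $p_1$ at the north pole by rotation and parameterise $p_2$ by its distance $x$ from the south pole, reusing the stage-two gap ratio $g_2(x)=\frac{\pi+x}{\pi-x}$, which is strictly increasing in $x$. The lower bound will then take the form $\min_{x\ge 0}\max\{g_2(x),h(x)\}$, where $h(x)$ is the smallest value of $\max\{g_3,g_4\}$ achievable by \emph{any} placement of $p_3$ and then $p_4$ once $p_1,p_2$ are fixed. Because $g_2$ is increasing while $h$ will turn out to be decreasing, the bound is pinned at the crossover $g_2(x)=h(x)$, and I expect this to evaluate to a value just above $1.726$. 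The whole difficulty is therefore concentrated in producing a valid \emph{lower} bound $h(x)$ on the four-point continuation, i.e.\ an estimate that holds for every choice of $p_3,p_4$, not merely the optimal one.

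To lower bound the continuation I would first isolate the obstruction that already governs three points: the maximal empty circle of three points is minimised (diameter $\pi$) exactly when they lie spread on a common great circle, in which case the two poles of that circle are antipodal empty caps of radius $\pi/2$. Hence if $p_3$ is placed to keep $g_3$ small, near the great-circle position at the midpoint of the long arc between $p_1$ and $p_2$ as in Theorem~\ref{lowerboundthm1}, then the four-point set still carries two nearly antipodal empty caps; a single point $p_4$ can be close to at most one of them, so the other persists, and since bringing $p_4$ near a pole drags $\rho_{\text{min}}$ down to about $\pi/2$ while leaving an antipodal cap of empty diameter near $\pi$, the ratio $g_4$ is driven up towards $2$. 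Conversely, tilting $p_3$ off the great circle to collapse the two caps into a single cap that $p_4$ can then cover necessarily enlarges $\rho^{T}_{\text{max}}$ or shrinks $\rho_{\text{min}}$ at stage three, and thus raises $g_3$. I would make both effects quantitative using the spherical laws of sines and cosines (Lemma~\ref{sphcosine}) together with Girard's theorem (Theorem~\ref{girard}), expressing, for a tilt parameter $t$ measuring the displacement of $p_3$ off the $p_1p_2$ great circle, the competing ratio $g_3(x,t)$ and a guaranteed lower bound on $g_4(x,t)$, so that the best continuation is the inner balance $h(x)=\min_t\max\{g_3(x,t),g_4(x,t)\}$, attained where $g_3=g_4$.

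The proof thus has a nested min--max structure: an inner optimisation over the placement of $p_3$ that balances the stage-three empty-cap penalty against the stage-four residual-cap penalty, and an outer optimisation over $x$ balancing this continuation against $g_2$. I would conclude by solving $g_2(x)=h(x)$, verifying the value exceeds $1.726$, and exhibiting the optimising configuration (a slightly stretched, tetrahedron-like arrangement with $p_1,p_2$ near-antipodal and $p_3,p_4$ straddling the equatorial region) as the witness that the bound is essentially tight for this adversary. I expect the \emph{main obstacle} to be the ``for all $p_4$'' step: one must show that \emph{no} fourth point can simultaneously control both residual empty regions, which requires arguing that the largest empty circle of a four-point set in general, non-coplanar position cannot drop below the claimed threshold. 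This is an empty-circle covering estimate on the sphere that does not reduce to the clean great-circle computation used for three points, and it is precisely where the numerical value $1.726$, rather than a closed form, originates.
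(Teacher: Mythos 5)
Your overall architecture is exactly that of the paper's proof: fix $p_1$ at the pole, constrain $x$ via $\frac{\pi+x}{\pi-x}\leq 1.726$ (so $x \leq \frac{0.726}{2.726}\pi$), optimise the continuation over the placement of $p_3$ with $p_4$ placed in the largest empty cap, and pin the bound at the crossing of the $3$-gap and $4$-gap ratios (cf.\ Fig.~\ref{gap3gap4}); the optimal configuration is indeed the stretched-tetrahedron arrangement you predict, with all three ratios $\approx 1.7261$. But as written your proposal has two genuine gaps. First, your inner parameterisation by a single ``tilt'' $t$ off the $p_1p_2$ great circle is not a priori sufficient: $p_3$ has two degrees of freedom, and you never argue that the position equidistant from $p_1$ and $p_2$ is optimal. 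The paper closes this with a concrete WLOG reduction (Fig.~\ref{figureplanesph}): given any candidate $u_3'$, intersect the plane $P$ through $u_1,u_2,u_3'$ with the bisector great circle $A$; the point $u_3 \in P \cap A$ lies on the \emph{same} circumcircle, hence yields the same largest-empty-circle centre and the same optimal $u_4$, while its distance to the nearer of $u_1,u_2$ is strictly larger, so the gap ratio only improves. This collapses the continuation to the single arc-length parameter $z$ along $A$, which is what makes the subsequent analysis tractable.

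Second --- and this is where your plan stalls --- the ``for all $p_4$'' step that you flag as the main obstacle does not in fact require any empty-circle covering estimate. Once $u_1,u_2,u_3$ are fixed, the optimal $u_4$ is simply the centre of the largest empty circle, which in the relevant regime is the antipode of the circumcentre $u_{123}$ of the three points: by definition it maximises the distance to the nearest placed point, and it simultaneously minimises the residual maximal empty circle. With that identification the paper needs only monotonicity: the $3$-gap ratio decreases in $z$, while past the crossing the $4$-gap ratio increases (as $z$ grows, the plane $P$ moves towards the origin, so $\zeta(u_1,u_4)$ shrinks while the circumradius --- the residual empty cap at $u_{123}$ --- grows), so the inner min--max sits exactly at the crossing; a final monotonicity argument in $x$ then shows $x=\frac{0.726}{2.726}\pi$ is optimal. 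Your two-cap heuristic (keeping $p_3$ coplanar with $O,p_1,p_2$ leaves two nearly antipodal empty caps, only one of which $p_4$ can kill, forcing a ratio near $2$) is qualitatively sound --- the paper exhibits the balanced case $\zeta(u_3,u_1)=\zeta(u_3,u_4)$ giving ratio exactly $2$ --- but it is subsumed by, not a substitute for, the monotone analysis. Without the $P\cap A$ reduction and the identification of the optimal $u_4$, your $h(x)$ is not a proven lower bound over all continuations, so the proposal as it stands does not yet establish the theorem, even though it is aimed along precisely the paper's route.
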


\begin{proof}
Consider Fig.~\ref{figpos234}. As in the previous proof, we begin by placing the first point $u_1$ at the north pole and the second point $u_2$ at some distance $x$ from the south pole on the $z=0$ plane, without loss of generality. We use an ansatz to estimate that for four points, the gap ratio cannot be less than $1.726$. In order that the gap ratio for the first two points is not greater than $1.726$, we have that:
\begin{eqnarray*}
\frac{\pi + x}{\pi - x} & \leq & 1.726 \\
\Rightarrow x & \leq & \frac{0.726}{2.726}
\end{eqnarray*}

To minimize the gap ratio after placing the third and fourth points, $u_3$ should be placed on the great circle $A$ lying equidistantly between $u_1$ and $u_2$. shown in Fig.~\ref{figpos234}. To see this, assume that $u'_3$ is not placed on $A$ and consider the plane $P$ which intersects points $u_1$, $u_2$ and $u_3$ as shown in Fig.~\ref{figureplanesph}. Let $u_3$ be the point lying on the intersection between $P$ and $A$ in the same hemisphere as $u'_3$. The center of the largest empty circle of $S$ after placing $u_1$, $u_2$ and $u'_3$ is the optimal position for point $u_4$ since placing $u_4$ at this point minimizes the largest remaining empty circle and maximizes the distance to the closest existing point (shown by $\rho_{\text{max}}^{\mathcal{S}}(u_1, u_2, u_3)$ in Fig.~\ref{figureplanesph}). The center of the largest empty circle after placing $u_1$, $u_2$ and $u'_3$ is either at the antipodal point of the circumcenter of $\langle u_1, u_2, u_3 \rangle$ if $u_3$ lies outside the circle on $S$ containing $u_1, u_2$, or else at the antipodal point of the midpoint of $u_1, u_2$ if $u_3$ lies within this circle (i.e. if $u_3$ is in the top green segment of Fig.~\ref{figureplanesph} where the plane $P$ is then defined as intersecting $u_1, u_2$ and parallel to the tangent plane of the midpoint of $u_1$ and $u_2$). In the second case, the position of $u_3$ has no impact on the maximal empty circle and thus we may place $u'_3$ on the arc $A$ without loss of generality. We therefore consider the first case when $u'_3$ is not within the circle containing $u_1, u_2$ and denote by $u_4$ the position of the center of the largest empty circle, at the antipodal point of the circumcenter of $u_1, u_2, u_3$ as is shown in Fig.~\ref{figureplanesph}. Note that this point is identical for either $\langle u_1, u_2, u'_3 \rangle$ or $\langle u_1, u_2, u_3 \rangle$ since $u'_3$ and $u_3$ lie on the same plane $P$ and thus share the same circumcircle. The spherical distance from $u_3$ to the closest of $u_1$ or $u_2$ is clearly smaller than the distance of $u'_3$ to either of $u_1$ or $u_2$ by the definition of $A$ (again shown in Fig.~\ref{figureplanesph}) and thus the gap ratio of sequence $u_1, u_2, u_3, u_4$ is no larger than the gap ratio of $u_1, u_2, u'_3, u_4$. We therefore assume that $u_3$ lies on arc $A$ and $u_4$ is placed at the antipodal point of the center of the circumcircle of $\langle u_1, u_2, u_3\rangle$.

\begin{figure}[ht]
\begin{minipage}[c]{0.45\linewidth}
\begin{center}
          \begin{tikzpicture}[scale=1]
\filldraw[fill=ggcol, draw=blue] (1.7321,1.0) arc (30:150:2.0) -- cycle;
\draw[draw=blue, <-] (-1.9773, 0.9644) arc (154:360:2.2);
\draw[draw=blue, ->] (2.2,0.0) arc (0:26:2.2);
\draw [black] (-3.0 , 1.0) -- (3.0, 1.0);

\draw[gray,->] (-2.5,0) -- (2.5,0);
\draw[gray,->] (0,-2.5) -- (0,2.5);
\draw[gray] (-1.7321, 1.0) -- (1.7321, 1.0);

\draw [thin, gray] (0,0) circle (2cm);

\node[] at (-1.5,0.7) {$u_1$};
\node[] at (1.5,0.7) {$u_2$};
\node[] at (0.4,1.2) {$u_3$};
\node[] at (-0.5,1.3) {$u'_3$};
\node[] at (0.3,-0.3) {$O$};
\node[] at (0.35,-1.6) {$u_4$};
\node[] at (2.4,-2.1) {$\rho_{\text{max}}^{\mathcal{S}}(u_1, u_2, u_3)$};
\node[] at (0.45,0.25) {$\pi-x$};
\node[] at (-0.2,-1.0) {$A$};
\node[] at (2.0,1.2) {$P$};
\node[] at (0.5,2.2) {$u_{123}$};

\draw[gray,<->] (-1.43,0.4) -- (1.43,0.4);
\draw[black, thick] (0.0,-2.0) -- (0.0,2.0);

\draw[black,fill=red] (1.43, 1.0) circle (.6ex);
\draw[black,fill=red] (-1.43, 1.0) circle (.6ex);
\draw[black,fill=red] (0.0, 1.0) circle (.6ex);
\draw[black,fill=red] (0.0, -2.0) circle (.6ex);
\draw[black,fill=black] (-0.5, 1.0) circle (.3ex);
\draw[black,fill=black] (0.0, 2.0) circle (.3ex);

          \end{tikzpicture}
        \end{center}
\caption{\footnotesize Plane $P$ intersecting $u_1, u_2, u_3$ and optimal position for $u_4$. $A$ is the great circle equator of $u_1, u_2$.}\label{figureplanesph}
\end{minipage}
\hspace{1.5cm}
\begin{minipage}[c]{0.45\linewidth}
         \begin{center}
          \begin{tikzpicture}[scale=2]

\filldraw[fill=rrcol, draw=blue] (0.7419, -0.6706) arc (-42:90:1) to[out=-120,in=1200] (0.0, -0.405) to[out=-60,in=-130] (0.7419, -0.6706);

\draw[gray,thick, ->] (0.9139, 0.4059) to[out=-100,in=-10] (-0.9139, -0.4059); 

\draw[gray, dashed, ->] (0.71, -0.03) to[out=-65,in=80] (0.7419, -0.6706); 
\draw[gray, dashed, ->] (0.71, -0.03) to[out=90,in=335] (0, 1); 

\draw[gray, dashed] (-0.71, -0.13) to[out=110,in=200] (0, 1); 
\draw[gray, dashed] (-0.71, -0.13) to[out=200,in=100] (-0.9139, -0.4059); 
\draw[gray, dashed] (-0.71, -0.13) to[out=270,in=225] (0.7419, -0.6706); 


\draw[thin, gray] (0,0) circle (1cm);

\draw[thin, gray, <->] (0,-1.1) arc (-90:-45:1.1); 

\node[] at (0, 1.13) {$u_1$};
\node[] at (0.9, -0.7) {$u_2$};
\node[] at (0.05, -0.305) {$u_3$};
\node[] at (-0.3, -0.3) {$A$};
\node[] at (0.45, -0.35) {$z$};
\node[] at (-0.57, -0.13) {$u_4$};
\node[] at (0.5, 0) {$u_{123}$};
\node[] at (0.56, -1.12) {$x$};
\node[] at (1.1, 0.4059) {$u_{12}$};

\draw[black,fill=red] (0, 1.0) circle (.3ex); 
\draw[black,fill=black] (0, -1.0) circle (.15ex); 
\draw[black,fill=red] (0.7419, -0.6706) circle (.3ex);
\draw[black,fill=red] (0.0, -0.405) circle (.3ex); 
\draw[black,fill=black] (0.71, -0.03) circle (.15ex); 
\draw[gray,fill=gray] (-0.71, -0.13) circle (.25ex); 
          \end{tikzpicture}
        \end{center}
\caption{\footnotesize Optimal positions of $u_3$ and $u_4$ for fixed $u_1$ and $u_2$. Point $u_{123}$ is the circumcenter of $u_1, u_2, u_3$ and $u_{12}$ is the midpoint of $u_1$ and $u_2$.}\label{figpos234}
\end{minipage}
\end{figure}

Let us thus assume that $x = \frac{0.726}{2.726}$ and fix $u_1$ and $u_2$. Consider again Fig.~\ref{figpos234}. As $z$ (the distance along the great arc $A$) increases from $0$ (when $u_3$ is at the midpoint of $u_1$ and $u_2$) to $\pi$ (at the antipodal point of the midpoint of $u_1$ and $u_2$), the gap ratio of $\{u_1, u_2, u_3\}$ decreases, since the minimal distance between points only increases and the size of the maximal empty circle (located either at the antipodal point of the center of the circumcircle of $\langle u_1, u_2, u_3\rangle$ or else at the antipodal point of the midpoint of $u_1, u_2$) monotonically decreases.

\begin{figure}[ht]
        \makebox[\textwidth][c]{\includegraphics[width=17cm]{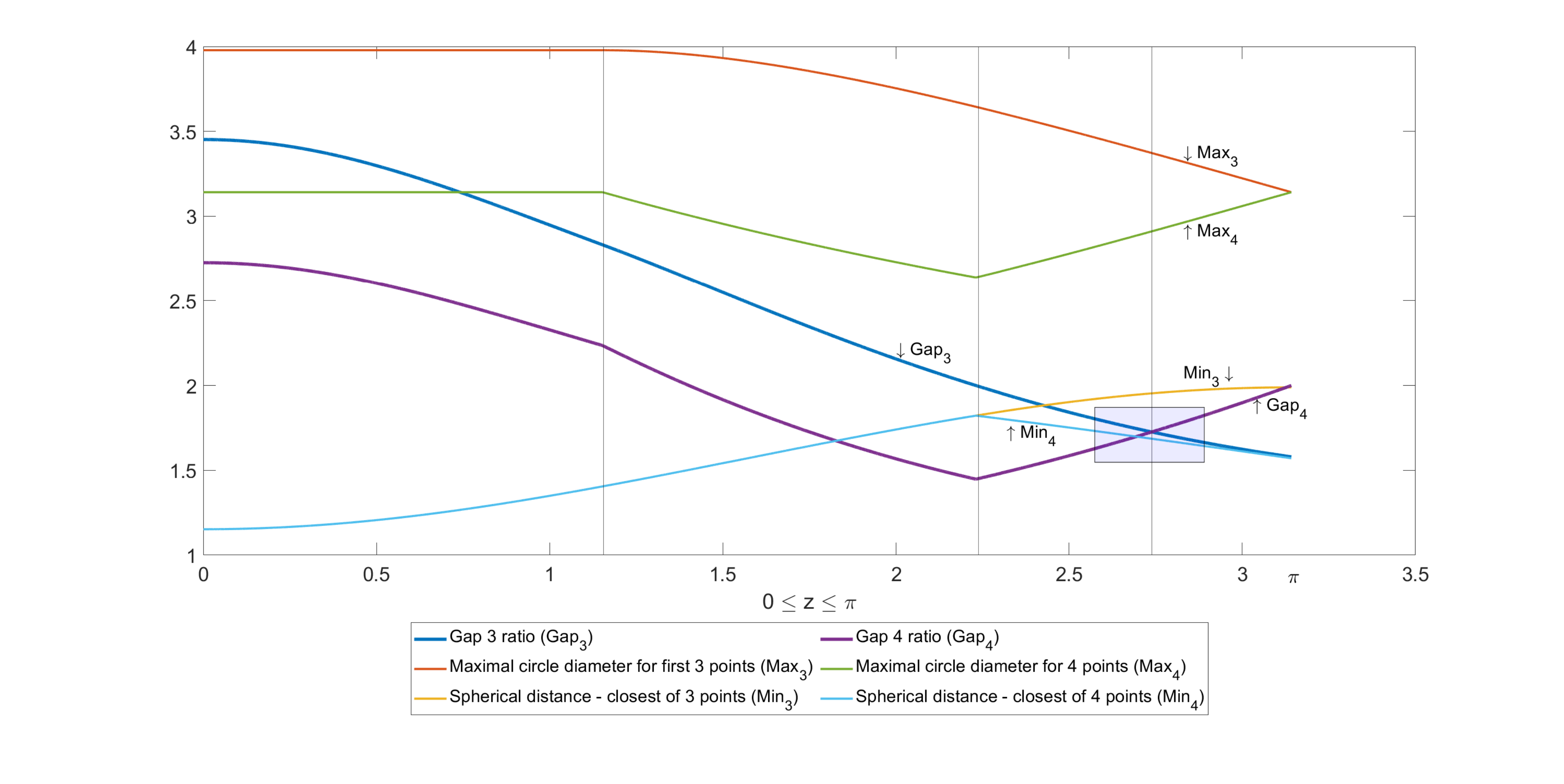}}%
   \caption{Gap-3 and Gap-4 ratios for various $z$ when $x = \frac{0.726}{2.726}\pi$.}\label{gap3gap4}%
\end{figure}

Let us denote by the \emph{$4$-gap ratio} the gap ratio of the four points $u_1, u_2, u_3, u_4$ ignoring the gap ratios of the subsequences $u_1, u_2$ and $u_1, u_2, u_3$ (similarly for the $3$-gap ratio, which ignores the gap ratio of $u_1, u_2$). The $4$-gap ratio of $u_1, u_2, u_3, u_4$ for $z$ in the range $[0, \pi]$ can be split into four segments as can be seen in Fig.~\ref{gap3gap4}. We will show that the optimal gap ratio occurs at the unique point where the gap-$3$ and gap-$4$ ratios cross.

Since we have shown that the gap 3 ratio is monotonically decreasing, decreasing $z$ from the intersection point of gap-$3$ and gap-$4$ ratios only only increases the overall gap ratio (which is the maximum of the two). At the intersection point, we have the following four points:
\[
u_1 = \begin{pmatrix} 0 \\ 1 \\ 0 \end{pmatrix}, 
u_2 = \begin{pmatrix}  0.742 \\ -0.670\\ 0 \end{pmatrix}, 
u_3 = \begin{pmatrix} -0.841\\ -0.374\\ 0.392 \end{pmatrix}, 
u_4 = \begin{pmatrix}  -0.259\\  -0.115\\ -0.959 \end{pmatrix}, 
\]

As $z$ increases from this intersection point, the gap-$4$ ratio increases which we now prove. Consider again Fig.~\ref{figureplanesph}. As $z$ increases, the plane $P$ moves downwards towards the origin since the circumcircle of $\langle u_1, u_2, u_3 \rangle$ has an increasing diameter. Once $\rho_{\text{max}}^{\mathcal{S}}(u_1, u_2, u_3) = 2\zeta(u_4, u_1) < \frac{2\pi}{3}$, then the largest empty circle of   $\{u_1, u_2, u_3, u_4\}$ is located at the antipodal point of $u_4$, i.e. the centre of the circumcircle of $\langle u_1, u_2, u_3\rangle$, which we denote $u_{123}$ in Fig.~\ref{figureplanesph}. Note that by the definition of circumcircle $\zeta(u_1, u_{123}) = \zeta(u_2, u_{123}) = \zeta(u_3, u_{123})$ and note also that $\zeta(u_1, u_{123}) + \zeta(u_1, u_{4}) = \pi$, as is clear from the Fig.~\ref{figureplanesph}.

The closest two points of $\{u_1, u_2, u_3\}$ is given by $\zeta(u_1, u_3) = \zeta(u_2, u_3)$. This follows since $\zeta(u_1, u_2) > \frac{2\pi}{3}$ and thus for any $z \in [0, \pi]$ then $\zeta(u_1, u_3) = \zeta(u_2, u_3) < \zeta(u_1, u_2)$. Since $u_4$ is at the antipodal point of the circumcentre of $\langle u_1, u_2, u_3 \rangle$, then $\zeta(u_1, u_4) = \zeta(u_2, u_4) = \zeta(u_3, u_4)$. Thus, once $\zeta(u_4, u_1) < \zeta(u_3, u_1)$ then we know that
\[ \zeta(u_4, u_1) = \zeta(u_4, u_2) = \zeta(u_4, u_2) < \zeta(u_1, u_3) = \zeta(u_2, u_3) < \zeta(u_1, u_2)
\]
For $u_1, u_2, u_3, u_4$ given above, the closest two points are therefore given by $\zeta(u_1, u_4) = 1.686 < \frac{2\pi}{3}$ since $\zeta(u_1, u_4) < \zeta(u_1, u_3) = 1.954 < \zeta(u_1, u_2) = 2.305$. The largest empty circle of $\{u_1, u_2, u_3, u_4\}$ is thus given by $w_{123}$. Finally, consulting Fig.~\ref{figureplanesph} once more, since increasing $z$ moves the plane $P$ towards the origin, we decrease $\zeta(u_1, u_4)$ and increase the maximum empty circle diameter given by $\zeta(u_1, u_{123})$ and thus the $4$-gap ratio necessarily increases. Therefore, the optimal gap ratio for fixed $u_1$ and $u_2$ and for $z \in [0, \pi]$ is at this intersection point, corresponding to points $u_1, u_2, u_3, u_4$ given above.

The gap ratios of $\{u_1, u_2, u_3, u_4\}$  can be calculated as: $\rho(u_1, u_2) = 1.726$, $\rho(u_1, u_2, u_3) = 1.7261$ and $\rho(u_1, u_2, u_3, u_4) = 1.7261$. Note that $u_3$ is placed on the arc $A$ equidistant from $u_1$ and $u_2$ (since $\zeta(u_1, u_3) = \zeta(u_2, u_3) =  1.9538$) as required, and that $u_4$ is in the opposite hemisphere from $u_3$ with $\zeta(u_1, u_4) = \zeta(u_2, u_4) = \zeta(u_3, u_4) = 1.686$. Now, since the gap ratios of points $\{u_1, u_2, u_3\}$ and $\{u_1, u_2, u_3, u_4\}$ are equal to $1.7261$, and as pointed out earlier, as $z$ increases the gap ratio of $\{u_1, u_2, u_3, u_4\}$ increases, and as $z$ decreases the gap ratio of $\{u_1, u_2, u_3\}$ increases, then for fixed $u_1$ and $u_2$, these points $u_3$ and $u_4$ are optimal.

It remains to reason about out choice of $x$ (the spherical distance from $u_1$ to $u_2$). We know that $x \leq \frac{0.726}{2.726}\pi$, otherwise the gap ratio of the first two points is already above the bound of $1.726$ and we have now analysed the case where we have equality. 

Consider Fig.~\ref{figpos234}. For any $x' \in [0, \frac{0.726}{2.726}\pi]$, there exists some choice of $z_2$ such that $\zeta(u_3, u_1) = \zeta(u_3, u_4)$. This is clear from the intermediate value theorem. The center of the largest empty circle for $u_1, u_2, u_3$ is at the antipodal point of the circumcenter of $u_1, u_2, u_3$, denoted $u_4$. Since $\zeta(u_3, u_1) = \zeta(u_3, u_4)$, then 
\[
\zeta(u_2, u_3)  = \zeta(u_1, u_3) = \zeta(u_3, u_4) = \zeta(u_1, u_4) = \zeta(u_2, u_4)
\]
The first equality is due to $u_3$ being placed on arc $A$ and the latter two equalities are due to $u_4$ being at the antipodal point of the circumcenter of $u_1, u_2, u_3$. Since $\zeta(u_3, u_4) =  \zeta(u_1, u_3)$, then the gap ratio is given by $\frac{2\zeta(u_3, u_4)}{\zeta(u_3, u_1)} = 2$. The numerator is derived since $u_4$ is the center of the largest empty circle for $u_1, u_2, u_3$ and the denominator denotes the closest two points. For larger values of $z$, the maximal empty circle diameter of the first three points decreases and the center of the largest empty circle is located at the circumcenter of the first three points ($u_{123}$ in Fig.~\ref{figureplanesph} and Fig.~\ref{figpos234}). For any choice of $x \leq \frac{0.726}{2.726}\pi$, if the gap ratio is to be less than $2$ we may thus assume that $z$ is chosen so that any increase in $z$ increases $\zeta(u_1, u_3)$ and the center of the largest empty circle is located at $u_{123}$, the circumcenter of $\langle u_1, u_2, u_3\rangle$.

Consider Fig.~\ref{figureplanesph} and assume the figure corresponds to the optimal choice of $z$ for $x = \frac{0.726}{2.726}\pi$. As $x$ decreases, the distance between points $u_1$ and $u_2$ increases and the distance from point $u_3$ to $u_1$ and $u_2$, for the same $z$ value, will decrease. In order that the gap-$3$ ratio of the first three points does not increase, the plane $P$ needs to move towards the origin, since the distance $\zeta(u_1, u_3)$ has decreased as $x$ decreased. Now consider the $4$-gap ratio. Since $P$ has moved towards the origin, the minimal distance between points can only decrease. The center of the maximal empty circle after placing $u_1, u_2, u_3, u_4$ is located at the circumcenter $u_{123}$ of $\langle u_1, u_2, u_3\rangle$ and the diameter of this circumcircle increases as $P$ mpvoes towards the origin. This implies that as $x$ decreases the gap ratio must increase and thus $x = \frac{0.726}{2.726}\pi$ is optimal.
\end{proof}

\subsection{Counterexample of lower bound 1.78 from \cite{ZC18}}\label{counterSec}

Theorem~2 of \cite{ZC18} claims a lower bound of $1.78$ for the gap ratio of online insertion of points onto the $2$-sphere. The result is based on Lemma~1 of \cite{ZC18} which derives the lower bound by considering three points and claiming that $1.78$ is optimal. This contradicts the (constructive) proof in the present paper of Lemma~\ref{lowerboundthm1} which shows how to generate three points whose gap ratio is $\frac{1+\sqrt{5}}{2}$ exactly\footnote{Indeed, the three points can be calculated as $u_1= (0, 1, 0), u_2 = (-0.725, -0.688, 0)$ and $u_3 = (-0.395, -0.919, 0)$, where $\zeta(u_1, u_2) = 2.3299$ and $\zeta(u_2, u_3) = \zeta(u_1, u_3) = 1.9766$.}.

The issue with Theorem~$2$ of \cite{ZC18} appears to be in the final line of reasoning of the proof. The authors derive that $0.719\pi \leq \beta \leq 0.764\pi$, where $\beta = \zeta(u_1, u_2)$ and calculate that $\gamma = \frac{\pi\beta}{2\pi-\beta}$ with $\gamma = \zeta(u_2, u_3)$. It is then claimed ``..the gap ratio is a decreasing function w.r.t $\beta$'' which appears incorrect. In order that the gap-$2$ and gap-$3$ ratios are identical, the authors derive that $\gamma = \frac{\pi\beta}{2\pi-\beta}$. Now, $\beta =  0.719\pi$ gives $\gamma = 1.763$ and thus a gap ratio of $1.78$. However this is not optimal given that $u_1, u_2, u_3, O$ lie on the same plane and thus $\zeta(u_1, u_3) = 2\pi - \beta - \gamma = 2.261 > \zeta(u_2, u_3) = \gamma$. When placing $u_3$ on the plane containing $O, u_1, u_2$, the optimal position lies equidistant to $u_1$ and $u_2$, since that maximizes the minimum distance between points, with the maximal empty circle unaffected (and of diameter $\pi$). Placing $u_3$ at this position would thus decrease the claimed lower bound of the gap ratio for three points.

\section{Conclusion}

In order to illustrate the rate of convergence of the gap ratio for various depths of tessellations starting from a single equilateral triangle of the regular icosahedron $T$, we wrote a program to perform recursive triangular dissection and to measure the minimum and maximum ratios. The results are shown in Table~\ref{resultsTab}. 

The table shows that starting from $T$, the gap ratio of the complete depth-$k$ tessellation of $T$ quickly approaches $1.38$. The next point inserted after reaching a complete depth-$k$ tessellation (with $12*4^k$ minimal triangles), requires us to split one of the edges in half according to Algorithm~\ref{thealg}. This decreases the minimum gap ratio by a factor of $\frac{1}{2}$ which increases the gap ratio by a factor of $2$. Therefore $2\cdot\rho^T(\sigma^k(T))$ shows the maximal gap ratio at any point, not only restricted to complete depth-$k$ tessellations.

\begin{table}[ht]
\footnotesize
\begin{center}
\begin{tabular}{|c|c|c|c|c|}
\hline
Depth of  & $\rho^T_{\text{min}}(\sigma^k(T))$ & $\rho^T_{\text{max}}(\sigma^k(T))$ & $\rho^T(\sigma^k(T))$ & $2\cdot\rho^T(\sigma^k(T))$ \\ Tessellation & & & &  \\
\hline
0 & 1.1071 & 1.3047 & 1.1784 & 2.3568\\
1 & 0.5536 & 0.7297 & 1.3182 & 2.6364\\
2 & 0.2768 & 0.3774 & 1.3636 & 2.7272\\
\vdots & \vdots & \vdots & \vdots & \vdots\\
7 & 0.0086 & 0.0119 & 1.3800 & 2.7600\\
\hline
\end{tabular}
\end{center}
\caption{The gap ratio of the depth-$k$ tessellation of the regular icosahedron when isolated to an equilateral spherical triangle $T$.}\label{resultsTab}
\end{table}

\begin{figure}[ht]
\begin{minipage}[c]{0.40\linewidth}
\begin{center}
          \begin{tikzpicture}[scale=1.5]
            \pgfmathsetmacro\R{1} 
            \fill[ball color=white!10, opacity=0.1] (0,0,0) circle (\R); 
            \tdplotsetmaincoords{65}{120}
            \begin{scope}[tdplot_main_coords, shift={(0,0)}]
              \coordinate (O) at (0,0,0);


\tdplotdefinepoints(0,0,0)(1, 0, 0)(0,1,0)
\tdplotdrawpolytopearc[thick, blue]{\R}{anchor=south}{$\frac{\pi}{2}$}

\tdplotdefinepoints(0,0,0)(0,1,0)(0, 0, 1)
\tdplotdrawpolytopearc[thick, blue]{\R}{anchor=west}{}

\tdplotdefinepoints(0,0,0)(0,0, 1)(1, 0, 0)
\tdplotdrawpolytopearc[thick, blue]{\R}{anchor=west}{}

\tdplotdefinepoints(0,0,0)(0,1, 0)(-1, 0, 0)
\tdplotdrawpolytopearc[dashed]{\R}{anchor=west}{}

\tdplotdefinepoints(0,0,0)(0,1, 0)(0, 0, -1)
\tdplotdrawpolytopearc[dashed]{\R}{anchor=west}{}

\tdplotdefinepoints(0,0,0)(0,0, -1)(1, 0, 0)
\tdplotdrawpolytopearc[dashed]{\R}{anchor=west}{}

\tdplotdefinepoints(0,0,0)(0,0, -1)(-1, 0, 0)
\tdplotdrawpolytopearc[dashed]{\R}{anchor=west}{}

\tdplotdefinepoints(0,0,0)(0,0, -1)(0, -1, 0)
\tdplotdrawpolytopearc[dashed]{\R}{anchor=west}{}

\tdplotdefinepoints(0,0,0)(1,0, 0)(0, -1, 0)
\tdplotdrawpolytopearc[dashed]{\R}{anchor=west}{}

\tdplotdefinepoints(0,0,0)(0,0, 1)(0, -1, 0)
\tdplotdrawpolytopearc[dashed]{\R}{anchor=west}{}

\tdplotdefinepoints(0,0,0)(-1,0, 0)(0, -1, 0)
\tdplotdrawpolytopearc[dashed]{\R}{anchor=west}{}

\tdplotdefinepoints(0,0,0)(0,0, 1)(-1, 0, 0)
\tdplotdrawpolytopearc[dashed]{\R}{anchor=west}{}

\tdplotdefinepoints(0,0,1)(0.1961,0, 0.9806)(0, 0.1961, 0.9806)
\tdplotdrawpolytopearc[thick]{0.2}{anchor=north}{$\frac{\pi}{2}$}

              \coordinate (X) at (3,0,0) ;
              \coordinate (Y) at (0,2,0) ;
              \coordinate (Z) at (0,0,2) ;

              \draw[-latex] (O) -- (X) node[anchor=west] {$X$};
              \draw[-latex] (O) -- (Y) node[anchor=west] {$Y$};
              \draw[-latex] (O) -- (Z) node[anchor=west] {$Z$};

            \end{scope}
          \end{tikzpicture}
        \end{center}
\caption{Octahedral spherical tessellation}\label{figuretessellation}
\end{minipage}
\hspace{1.5cm}
\begin{minipage}[c]{0.40\linewidth}
\begin{center}
          \begin{tikzpicture}[scale=1.5]
            \pgfmathsetmacro\R{1} 
            \fill[ball color=white!10, opacity=0.1] (0,0,0) circle (\R); 
            \tdplotsetmaincoords{65}{120}
            \begin{scope}[tdplot_main_coords, shift={(0,0)}]
              \coordinate (O) at (0,0,0);


\tdplotdefinepoints(0,0,0)(0, 0.5257, 0.8507)(0, -0.5257, 0.8507) 	
\tdplotdrawpolytopearc[thick, blue]{\R}{anchor=south}{}			
\tdplotdefinepoints(0,0,0)(0, 0.5257, 0.8507)(0.8507, 0, 0.5257)		
\tdplotdrawpolytopearc[thick, blue]{\R}{anchor=south}{}
\tdplotdefinepoints(0,0,0)(0, -0.5257, 0.8507)(0.8507, 0, 0.5257)	
\tdplotdrawpolytopearc[thick, blue]{\R}{anchor=south}{}
\tdplotdefinepoints(0,0,0)(0.8507, 0, 0.5257)(0.5357, 0.8407, 0)		
\tdplotdrawpolytopearc[thick, blue]{\R}{anchor=south}{}
\tdplotdefinepoints(0,0,0)(0, 0.5257, 0.8507)(0.5257, 0.8507, 0) 	
\tdplotdrawpolytopearc[thick, blue]{\R}{anchor=south}{}	
\tdplotdefinepoints(0,0,0)(0.8507, 0, 0.5257)(0.8507, 0, -0.5257)	
\tdplotdrawpolytopearc[thick, blue]{\R}{anchor=south}{}
\tdplotdefinepoints(0,0,0)(0.8407, 0, -0.5257)(0.5357, 0.8507, 0)	
\tdplotdrawpolytopearc[thick, blue]{\R}{anchor=south}{}
\tdplotdefinepoints(0,0,0)(0.8407, 0, -0.5257)(0.5357, -0.8407, 0)	
\tdplotdrawpolytopearc[thick, blue]{\R}{anchor=south}{}
\tdplotdefinepoints(0,0,0)(0.8607, 0, 0.5357)(0.5157, -0.8407, 0)	
\tdplotdrawpolytopearc[thick, blue]{\R}{anchor=south}{}
\tdplotdefinepoints(0,0,0)(0, -0.5257, 0.8507)(0.5257, -0.8507, 0)	
\tdplotdrawpolytopearc[thick, blue]{\R}{anchor=south}{}
\tdplotdefinepoints(0,0,0)(0, 0.5257, 0.8507)(-0.8507, 0, 0.5257) 	
\tdplotdrawpolytopearc[blue]{\R}{anchor=south}{}	
\tdplotdefinepoints(0,0,0)(0, -0.5257, 0.8507)(-0.8507, 0, 0.5257) 	
\tdplotdrawpolytopearc[blue]{\R}{anchor=south}{}	
\tdplotdefinepoints(0,0,0)(0, 0.5257, 0.8507)(-0.5257, 0.8507, 0) 	
\tdplotdrawpolytopearc[blue]{\R}{anchor=south}{}	
\tdplotdefinepoints(0,0,0)(0.5257, 0.8507, 0)(-0.5257, 0.8507, 0) 	
\tdplotdrawpolytopearc[thick, blue]{\R}{anchor=south}{}	
\tdplotdefinepoints(0,0,0)(0, 0.5257, -0.8507)(0, -0.5257, -0.8507) 	
\tdplotdrawpolytopearc[dashed]{\R}{anchor=south}{}
\tdplotdefinepoints(0,0,0)(0, 0.5257, -0.8507)(0.8507, 0, -0.5257) 	
\tdplotdrawpolytopearc[blue]{\R}{anchor=south}{}
\tdplotdefinepoints(0,0,0)(0, 0.5257, -0.8507)(-0.8507, 0, -0.5257) 	
\tdplotdrawpolytopearc[dashed]{\R}{anchor=south}{}
\tdplotdefinepoints(0,0,0)(0, 0.5257, -0.8507)(0.5257, 0.8507, 0) 	
\tdplotdrawpolytopearc[blue]{\R}{anchor=south}{}
\tdplotdefinepoints(0,0,0)(0, 0.5257, -0.8507)(0.5257, -0.8507, 0) 	
\tdplotdrawpolytopearc[dashed]{\R}{anchor=south}{}	
\tdplotdefinepoints(0,0,0)(0, -0.5257, 0.8507)(-0.5257, -0.8507, 0)	
\tdplotdrawpolytopearc[dashed]{\R}{anchor=south}{}
\tdplotdefinepoints(0,0,0)(0, -0.5257, -0.8507)(0.8507, 0, -0.5257)	
\tdplotdrawpolytopearc[dashed]{\R}{anchor=south}{}
\tdplotdefinepoints(0,0,0)(0, -0.5257, -0.8507)(-0.8507, 0, -0.5257)	
\tdplotdrawpolytopearc[dashed]{\R}{anchor=south}{}
\tdplotdefinepoints(0,0,0)(0, -0.5257, -0.8507)(0.5257, -0.8507, 0)	
\tdplotdrawpolytopearc[dashed]{\R}{anchor=south}{}
\tdplotdefinepoints(0,0,0)(0, -0.5257, -0.8507)(-0.5257, -0.8507, 0)	
\tdplotdrawpolytopearc[dashed]{\R}{anchor=south}{}
\tdplotdefinepoints(0,0,0)(-0.8507, 0, 0.5257)(-0.8507, 0, -0.5257)	
\tdplotdrawpolytopearc[dashed]{\R}{anchor=south}{}
\tdplotdefinepoints(0,0,0)(-0.8607, 0, 0.5157)(-0.5457, 0.8307, 0)	
\tdplotdrawpolytopearc[dashed]{\R}{anchor=south}{}
\tdplotdefinepoints(0,0,0)(-0.8307, 0, 0.5357)(-0.5157, -0.8507, 0)	
\tdplotdrawpolytopearc[dashed]{\R}{anchor=south}{}

\draw[] (0, 0.5257, 0.8507) node[anchor=north west] {$u_1$};
\draw[] (0, -0.5257, 0.8507) node[anchor=south east] {$u_2$};
\draw[] (0, 0.5257, -0.8507) node[anchor=north] {$u_3$};
\draw[] (0.8507, 0, 0.5257) node[anchor=west] {$u_5$};
\draw[] (0.8507, 0, -0.5257) node[anchor=north east] {$u_6$};
\draw[] (-0.8507, 0, 0.5257) node[anchor=south] {$u_7$};
\draw[] (0.5257, 0.8507, 0) node[anchor=north west] {$u_9$};
\draw[] (0.5257, -0.8507, 0) node[anchor=east] {$u_{10}$};
\draw[] (-0.5257, 0.8507, 0) node[anchor=west] {$u_{11}$};

              \coordinate (X) at (3,0,0) ;
              \coordinate (Y) at (0,2,0) ;
              \coordinate (Z) at (0,0,2) ;

              \draw[-latex] (O) -- (X) node[anchor=west] {$X$};
              \draw[-latex] (O) -- (Y) node[anchor=west] {$Y$};
              \draw[-latex] (O) -- (Z) node[anchor=west] {$Z$};

            \end{scope}
          \end{tikzpicture}
        \end{center}
\caption{Icosahedral spherical tessellation}\label{figureisotessellation}
\end{minipage}
\end{figure}

To evaluate the most appropriate initial shape for our algorithm, we derived (both theoretically and with a computational simulation) the gap ratios of the stage 1 and stage 2 tessellations of various Platonic solids, shown in Table~\ref{results2Tab}. The results for the dodecahedron are from \cite{walcom17} using a different tessellation (the dodecahedron has non triangular faces).

\begin{table}[ht]
\footnotesize
\begin{center}
\begin{tabular}{|c|c|c|c|c|}
\hline
 & Tetrahedron & Octahedron & Dodecahedron & Icosahedron  \\
\hline
Stage 1 & 2.289 & 2.0 & 2.618 & \emph{2.8376} \\
\hline
Stage 2 & \emph{5.921} & \emph{3.601} & \emph{5.995} & 2.760 \\
\hline
\end{tabular}
\end{center}
\caption{The gap ratio of stage one and two of various regular Platonic solids. Italic elements show which value defines the overall gap ratio in each case.}\label{results2Tab}
\end{table}

 The results match our intuition, that a finer grained initial tessellation such as that from an icosahedron performs much better in stage 2 than a more coarse grained initial tessellation such as that from a tetrahedron. This is illustrated by Fig.~\ref{skewedtess} which shows that for a large initial equilateral spherical triangle, the recursive triangular dissection procedure deforms the four triangle by a larger margin. The regular icosahedron thus has the essential criteria that we require; it has a low stage 1 and stage 2 gap ratio, and it is a regular tessellation into equilateral spherical triangles. It would be interesting to consider modifications of the stage 2 procedure which may allow the octahedron to be utilised, given its low stage 1 gap ratio. This may require some a modification of Lemma~\ref{mainlem} which works also with non equilateral spherical triangles. 

It would also be beneficial to tighten the lower bound. In Theorem~\ref{lowerboundthm1} we derive the lower bound of $1.618$ by considering the first three points and in Theorem~\ref{lowerboundthm2} we consider the first four points to derive a lower bound of $1.726$. Our proof technique shows that considering the first three or four points gives one or two degrees of freedom respectively. Even with two degrees of freedom the reasoning becomes significantly more challenging and therefore a different approach may be necessary in order to derive a reasonable increase in the lower bound.

\end{document}